\newcommand{\RN}[1]{%
	\textup{\uppercase\expandafter{\romannumeral#1}}%
}
\def\C{\mathbb{C}}
\def\R{\mathbb{R}}
\newcommand{\re}{\operatorname{Re}}
\newcommand{\im}{\operatorname{Im}}
\theoremstyle{plain}
\newtheorem{thm}{Theorem}[section]
\newtheorem{cor}[thm]{Corollary}
\newtheorem{lem}[thm]{Lemma}
\theoremstyle{remark}
\newtheorem{rem}{Remark}
\numberwithin{equation}{section}
\begin{document}

\title[Partition functions of lemniscate ensembles]{Anomalous free energy expansions of planar Coulomb gases: multi-component and conformal singularity 
}
\author{Sung-Soo Byun}
\address{Department of Mathematical Sciences and Research Institute of Mathematics, Seoul National University, Seoul 151-747, Republic of Korea}
\email{sungsoobyun@snu.ac.kr}

\begin{abstract}
We study the partition function  
$$
Z_n = \int_{\mathbb{C}^n } \prod_{1 \le j<k \le n} |z_{j}-z_{k}|^{2} \prod_{j=1}^{n} |z_j|^{2c}\, e^{-n V(z_{j})}\frac{d^{2}z_{j}}{\pi},
$$
where $c>-1$ and
$$
V(z)= |z|^{2d}-t(z^{d}+\overline{z}^{d}), \qquad t >0, \, d \in \mathbb{N}.
$$
The associated droplet reveals a topological phase transition: for $t > 1/\sqrt{d}$, it consists of $d$ connected components; whereas for $t < 1/\sqrt{d}$, it becomes simply connected and contains the origin, where a conformal singularity arises.
In both regimes, we establish the asymptotic expansion  
$$ 
\log Z_n = C_1 n^2 + C_2 n \log n + C_3 n + C_4 \log n + C_5 + O(n^{-1}),
$$ 
as $n \to \infty$, and derive all coefficients explicitly. In the multi-component regime $t > 1/\sqrt{d}$, the constant term $C_5$ exhibits an oscillatory behaviour that depends on the congruence class of $n$ modulo $d$. In particular, in the special case $c = 0$ with $n$ divisible by $d$, our result confirms a conjecture of Deaño and Simm. In contrast, in the conformal singularity regime $t < 1/\sqrt{d}$, the oscillatory behaviour disappears, while additional contributions in $C_4$ arise beyond the scope of the conjecture of Jancovici et al. As a special case $d = 1$, our result yields the asymptotic expansion for the moments of the characteristic polynomial of the complex Ginibre ensemble with finite exponent. In the bulk regime, we further derive the full expansion in powers of $1/N$, thereby providing a precise evaluation of the error term in the result of Webb and Wong.
\end{abstract}

\maketitle

\section{Introduction and results} 

For a given external potential $W: \C \to \R$, the two-dimensional Coulomb gas ensemble $\{z_j\}_{j=1}^n$ ($z_j \in \C$) \cite{Fo10,Se24} at inverse temperature $\beta=2$ is defined by the probability measure 
\begin{equation} \label{def of Gibbs}
\frac{1}{Z_n(W)} \prod_{1 \le j<k \le n} |z_{j}-z_{k}|^{2} \prod_{j=1}^{n}e^{-n W(z_{j})}\frac{d^{2}z_{j}}{\pi}, 
\end{equation}
where $Z_n(W)$ is the partition function.
As $n \to \infty$, the empirical measure of $\{z_j\}_{j=1}^n$ weakly converges to the equilibrium measure $\sigma_W$, the unique minimiser of the weighted logarithmic energy  
\begin{equation} \label{def of log energy}
	I_W[\mu]:= \int_{ \mathbb{C}^2 } \log \frac{1}{ |z-w| }\, d\mu(z)\, d\mu(w) +\int_{ \mathbb{C} } W(z) \,d\mu(z). 
\end{equation}
In addition, under standard assumptions from potential theory \cite{ST97}, the equilibrium measure \( \sigma_W \) takes the form
\begin{equation} 
d \sigma_W(z) =  \Delta W(z)\, \mathbbm{1}_{S_W}(z) \, \frac{d^2z}{\pi}, \qquad \Delta=\partial \bar{\partial},   
\end{equation}
where the compact support $S_W$ is called the droplet.

In recent years, there has been growing interest in the asymptotic behaviour of the free energy $\log Z_n(W)$ as $n \to \infty$. Accumulated knowledge (see \cite[Section 5.3]{BF25}, \cite[Section 9.3]{Se24} and introductions of \cite{BKS23,BSY24,ACC23,BKSY25}) now suggests that for an $n$-independent potential $W$, the partition function $Z_n(W)$ admits an asymptotic expansion of the form
\begin{align}
\begin{split}
	\log Z_n(W)  & = - I_W[\sigma_W] n^2 +\frac12 n \log n + \Big( \frac{\log(2\pi)}{2}-1 - \frac12 \int_\C \log (\Delta W) \,d\sigma_W \Big) n 
\\
&\quad +  \frac{6-\chi}{12}\log n   + \chi \, \zeta'(-1) + \frac{\log(2\pi)}{2} + \mathcal{F}[W] + \mathcal{G}_n[W] + \mathcal{H}_n[W] + o(1), \label{Z expansion}
\end{split}
\end{align} 
as $n \to \infty$, where $\chi \equiv \chi(S_W)$ is the Euler characteristic of the droplet $S_W$ and $\zeta$ is the Riemann zeta function. (See also \cite{Ch22,Ch23,BP24,AFLS25} for studies addressing cases with hard edges.) 
 
The leading three terms of the free energy expansion \eqref{Z expansion}—of orders $O(n^2)$, $O(n \log n)$, and $O(n)$—were established in \cite{LS17} for general Coulomb gases. These results were subsequently refined in \cite{Se23, BBNY19, AS21}, where quantitative error bounds were obtained.
In contrast to the first three terms, the remaining two terms of order $O(\log n)$ and $O(1)$ remain open and present significant challenges. A key distinction is that these latter terms depend in a highly non-trivial way on the topological and conformal geometric nature of the equilibrium measure. The $O(\log n)$ term was proposed in the early works of Jancovici et al. \cite{JMP94, TF99}, highlighting its dependence on the topology of the droplet (see also \cite{CFTW15}). The $O(1)$ term—particularly the constant $\mathcal{F}[W]$—is believed to involve conformal geometric functionals (the zeta regularised determinant of spectral Laplacian), as predicted by Wiegmann and Zabrodin using conformal field theory \cite{ZW06}.  

Of particular interest in this note are the last two terms, $\mathcal{G}_n$ and $\mathcal{H}_n$, which are typically zero in most of the existing literature. However, these terms in fact encode additional intrinsic information of the equilibrium measure.

\begin{itemize}
    \item \textbf{(Multi-component term $\mathcal{G}_n$)} 
The term $\mathcal{G}_n[W]$ remains bounded as $n \to \infty$ but does \textit{not} converge. This term is believed to arise only when the droplet is \textit{disconnected}. This phenomenon was studied in \cite{ACC23} (see also \cite{Ch22,Ch23,ACCL24,ACC23a}), where the authors established the expansion \eqref{Z expansion} for a class of radially symmetric potentials whose droplets form concentric annuli.
However, no concrete example has yet confirmed the presence of the oscillatory term $\mathcal{G}_n$ in the free energy expansion \eqref{Z expansion} for non-radially symmetric potentials (see however \cite{AC24} for a discussion of the oscillatory behaviour in the context of fluctuation phenomena in Coulomb gases with multi-component droplets).
    \smallskip 
    \item \textbf{(Conformal singularity term $\mathcal{H}_n$)} In order to discuss the term $\mathcal{H}_n$, we define  
\begin{equation} \label{def of conformal metric}
\varphi(z) := \frac{1}{2} \log \Delta W(z)
\end{equation}  
so that $e^{2\varphi(z)}$ serves as a conformal metric in the expression for the equilibrium measure \eqref{def of eq msr}. In much of the literature on free energy expansions beyond the \( O(n) \) term, the conformal metric factor \( \varphi \) is assumed to be ``regular'', in the sense that it is bounded above and below within the droplet. This assumption is partially motivated by the fact that if \( \varphi \) becomes singular at some point, then the associated spectral determinant of the Laplacian also diverges, influencing unbounded terms in the expansion (see Remark~\ref{Rem_diverging Laplacian}). Therefore, it is believed that the contribution \( \mathcal{H}_n  \to \infty \) arises in cases where \( \varphi \) exhibits such singularities inside the droplet.

\end{itemize}

In contrast to much of the literature on free energy expansions in two dimensions, the multi-component and conformal singularity regimes remain less explored. This note aims to contribute to this direction by considering a class of potentials exhibiting discrete rotational symmetry. In particular, we focus on a well-studied model known as the \textit{lemniscate ensemble}, which enables a detailed analysis of the non-standard terms \( \mathcal{G}_n \) and \( \mathcal{H}_n \), including explicit formulae.

\medskip 
 
By definition, the lemniscate ensemble \cite{BM15,BGM17,BEG18,BLY21,BY23} follows \eqref{def of Gibbs} with the potential 
\begin{align} \label{def of V lemniscate}
V_{d,t}(z) := |z|^{2d}-t(z^{d}+\overline{z}^{d}), \qquad t \ge 0, \, d \in \mathbb{N}. 
\end{align}
Using the discrete rotational symmetry $V_{d,t}(z)=V_{d,t}(e^{2\pi i/d} z)$, it was shown in \cite{BM15} that the associated equilibrium measure is 
\begin{equation} \label{def of eq msr}
d\sigma_V(z) = d^2|z|^{2d-2} \mathbbm{1}_{S_V}(z) \frac{d^2z}{\pi}, \qquad S_V : = \Big \{  z \in \C : (\re (z^d)-t)^2+(\im (z^d))^2 \le \frac{1}{d} \Big\}.
\end{equation}
The droplet $S_V$ undergoes a topological phase transition at the critical value 
\begin{equation}
t_c := 1/\sqrt{d}. 
\end{equation}
For $t < t_c$, the droplet consists of a single connected component, whereas for $t > t_c$, it splits into $d$ disjoint components, see Figure~\ref{Fig_SV}.

\begin{figure}[t]
	\begin{subfigure}{0.24\textwidth}
	\begin{center}	
		\includegraphics[width=\textwidth]{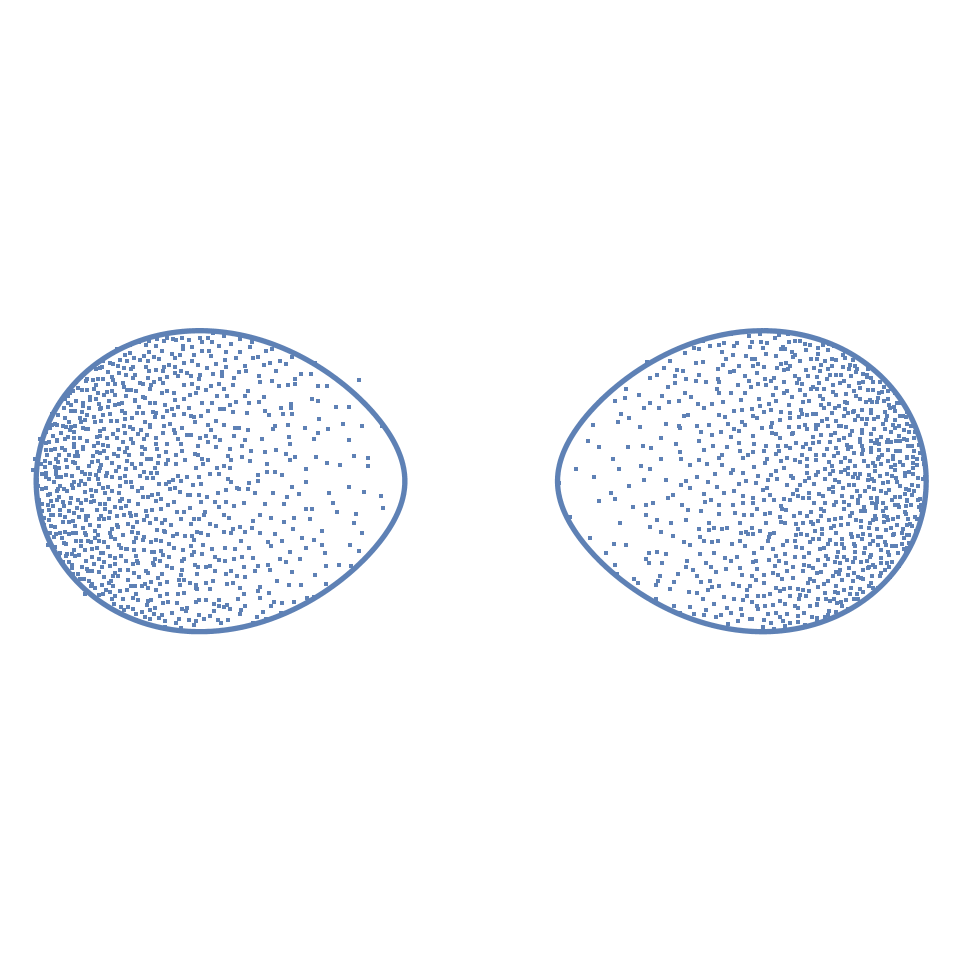}
	\end{center}
	\subcaption{$d=2, t=0.75$}
\end{subfigure}	
\begin{subfigure}{0.24\textwidth}
	\begin{center}	
		\includegraphics[width=\textwidth]{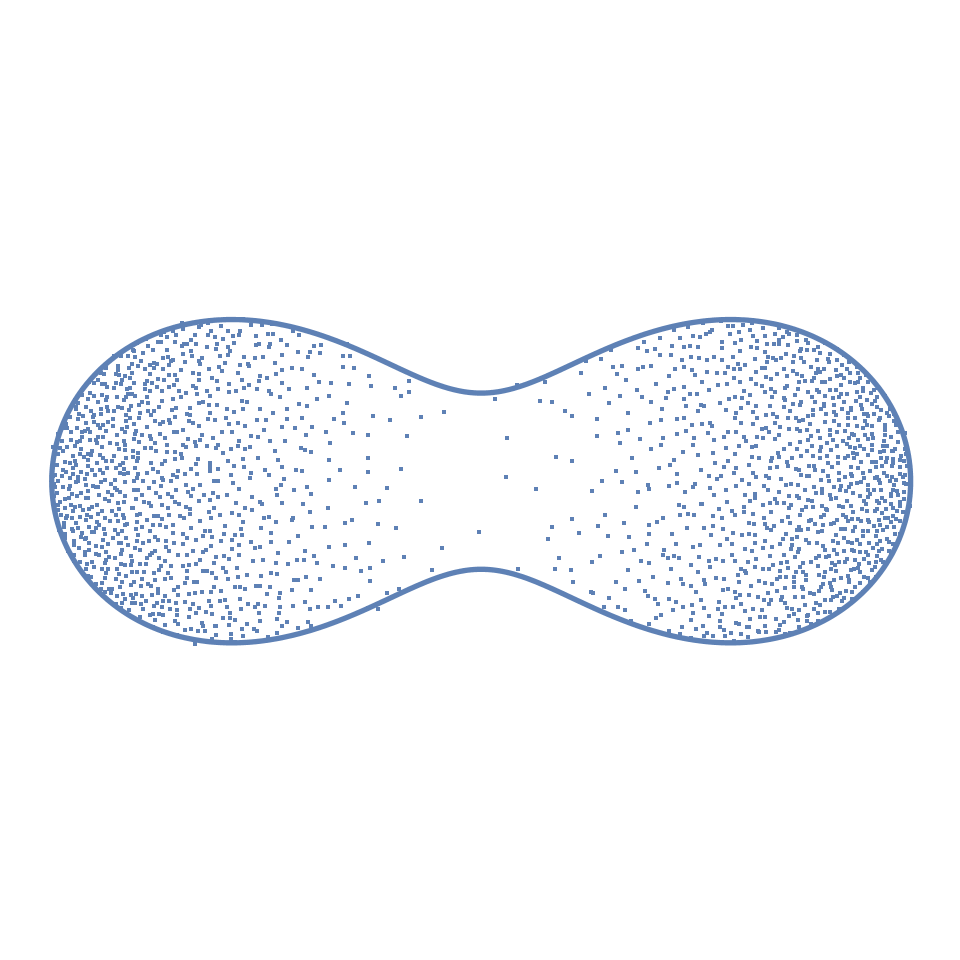}
	\end{center}
	\subcaption{$d=2, t=0.65 $}
\end{subfigure}	
\begin{subfigure}{0.24\textwidth}
	\begin{center}	
		\includegraphics[width=\textwidth]{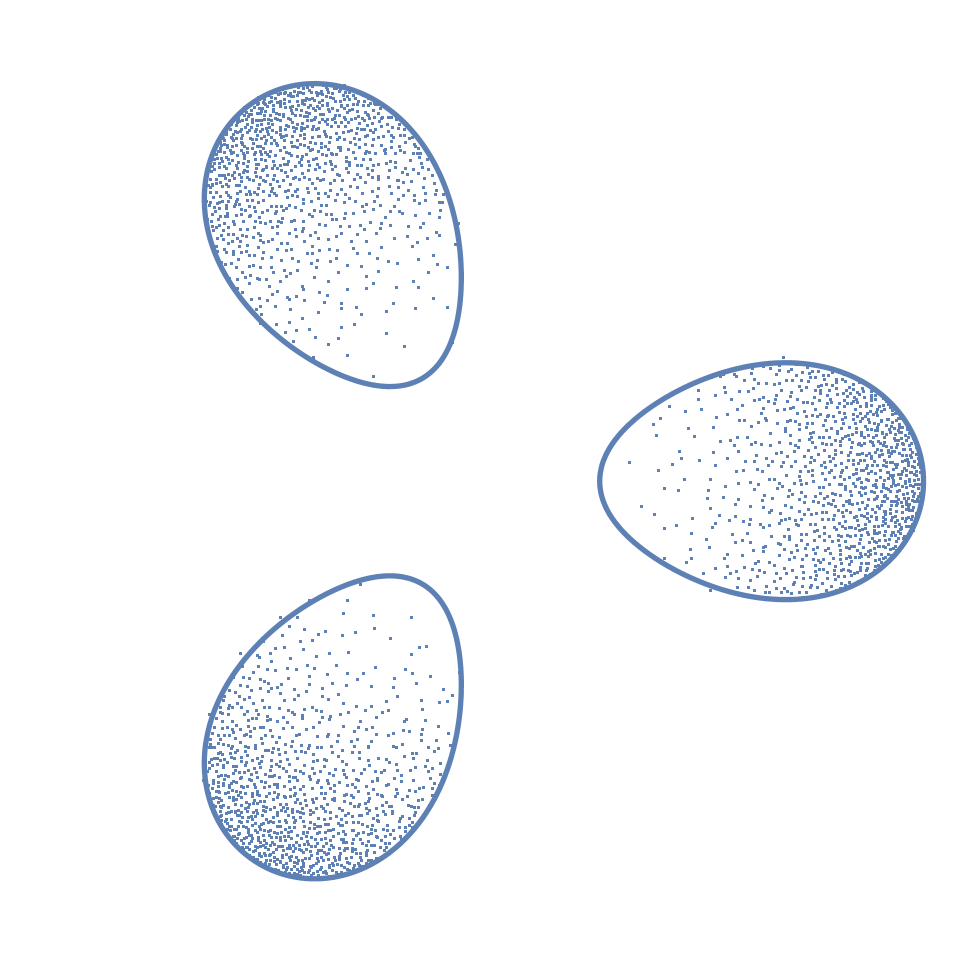}
	\end{center}
	\subcaption{$d=3, t=0.6 $}
\end{subfigure}	  
\begin{subfigure}{0.24\textwidth}
	\begin{center}	
		\includegraphics[width=\textwidth]{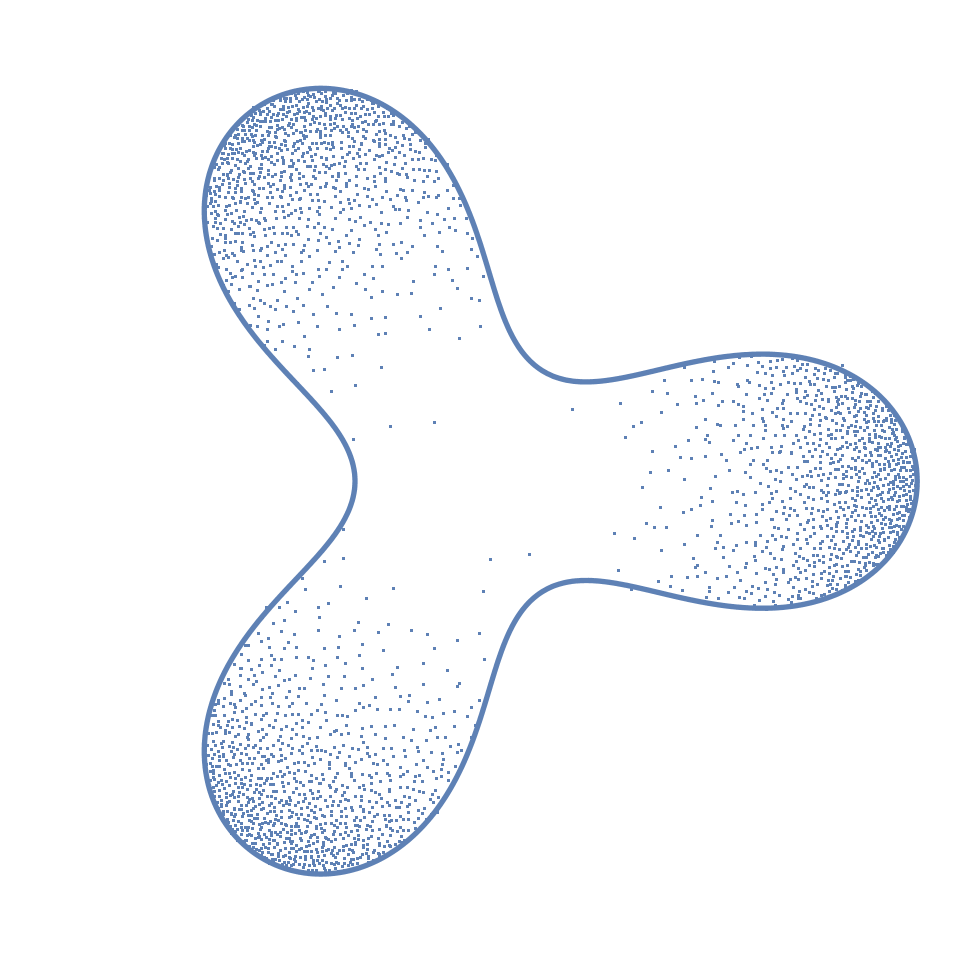}
	\end{center}
	\subcaption{$d=3, t=0.55 $}
\end{subfigure}	
	\caption{Illustration of the lemniscate ensembles, where $N=1000d$.} \label{Fig_SV}
\end{figure}

In addition to \eqref{def of V lemniscate}, for $n \in \mathbb{N},$ we define 
\begin{equation} \label{def of Vc lemniscate}
V_{d,t}^{(c)}(z):= V_{d,t}(z)-\frac{2c}{n} \log |z|, \qquad c > -1. 
\end{equation}
Here, the $\log|z|$ term represents the insertion of a point charge at the origin (see e.g. \cite{AKS23}), often referred to as a spectral singularity in random matrix theory. This insertion is intimately connected to the characteristic polynomials of random matrices \cite{AV03}, and the associated planar orthogonal polynomials have been extensively studied in the literature~\cite{BBLM15,BGM17,BKP23,BEG18,BFKL25,BSY24,BY23,LY17,LY19,LY23}.
In our main result, we derive the asymptotic expansion of the partition functions associated with the potential \eqref{def of Vc lemniscate}. 

\begin{thm}[\textbf{Free energy expansion of lemniscate ensembles}] \label{Thm_free energy expansion}
Let $V_{d,t}^{(c)}$ be the external potential defined by \eqref{def of Vc lemniscate}. Suppose that $d \in \mathbb{N}$, $c > -1$ and $t \ge 0$. 
Then as $n \to \infty$, 
\begin{align}
\begin{split}  \label{log Z expansion main}
\log Z_n(V_{d,t}^{(c)} )  = C_1 \, n^2 + C_2 \, n \log n + C_3 \, n + C_4\, \log n + C_5 +O(n^{-1}),
\end{split}
\end{align} 
where 
\begin{equation}\label{def of C1 C2}
C_1= t^2 -\frac{3}{4d}-\frac{\log d}{2 d},   \qquad C_2= \frac12, 
\end{equation}
and $C_3$, $C_4$, and $C_5$ are given as follows. 
\begin{itemize} 
\item \textup{\textbf{(Multi-component regime)}} Suppose $t>1/\sqrt{d}$. Then  
\begin{align}
C_3 &=  \frac{\log(2\pi)}{2}-1 + \frac{ 1+2c-d   }{ d } \log t - \log d ,
\\
C_4 &= \frac{6-d}{12} , 
\\ 
\begin{split}
C_5 &= d \,\zeta'(-1)  + \frac{\log(2\pi)}{2}  + \frac{1}{d} \Big( c(d-c-1) -\frac{(d-1)(2d-1)}{6} \Big) 
\log\Big( \frac{dt^2-1}{dt^2} \Big)  
\\
&\quad + \frac{d}{12} \log d  +  d \, \Big\{ \frac{n}{d} \Big\} \Big( \Big\{ \frac{n}{d} \Big\} -1 \Big)  \log( \sqrt{d}t ). 
\end{split}
\end{align}
Here,  $\{ n/d \} = n/d-\lfloor n/d \rfloor$ is the fractional part of $n/d$. 
\smallskip 
\item \textup{\textbf{(Conformal singularity regime)}} Suppose $t<1/\sqrt{d}$. Then  
\begin{align}
C_3  &=  \frac{\log(2\pi)}{2}-1 +\frac{1+2c-d}{2d}(dt^2-1) -\frac{  1+2c+d }{2d} \log d ,
\\
C_4 &= \frac{5}{12} + \frac{(d-1)^2}{12d} -\frac{c(d-c-1)}{2d}  , 
\\ 
\begin{split}
C_5 &= d \,\zeta'(-1)  + \frac{3+2c-d}{4}\log(2\pi) 
\\
&\quad + \Big( \frac{d}{12} - \frac{(d-1)(2d-1)}{12d} + \frac{c(d-c-1)}{2d} \Big) \log d - \sum_{\ell=0}^{d-1} \log G\Big(  \frac{\ell+1+c}{d} \Big).
\end{split}
\end{align} 
Here, $G$ is the Barnes $G$-function. 
\end{itemize}
\end{thm}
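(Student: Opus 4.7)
The plan is to start from the Andr\'eief (Heine) identity
$$Z_n(V_{d,t}^{(c)}) = n!\prod_{k=0}^{n-1} h_k, \qquad h_k := \|p_k\|^2_{|z|^{2c}e^{-nV_{d,t}}},$$
where $p_k$ is the monic degree-$k$ planar orthogonal polynomial for the weight $|z|^{2c}e^{-nV_{d,t}(z)}$. Since this weight is invariant under $z\mapsto e^{2\pi i/d}z$, monomials $z^j,z^k$ are orthogonal unless $j\equiv k\pmod d$, so the OPs decompose as $p_{dk+\ell}(z) = z^\ell Q_k^{(\ell)}(z^d)$ for $\ell=0,\dots,d-1$. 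Writing $u=z^d$ and using the $d$-fold cover with Jacobian $|u|^{-2(d-1)/d}/d^2$, each $Q_k^{(\ell)}$ is the monic degree-$k$ OP on $\C$ for the weight
$$d\mu_\ell(u) = |u|^{2\alpha_\ell - 2}\,e^{-n(|u|^2 - t(u+\bar u))}\,\frac{d^2u}{\pi},\qquad \alpha_\ell := (\ell+c+1)/d.$$
After completing the square, this is a shifted Ginibre weight centered at $u=t$ with a spectral singularity of exponent $2\alpha_\ell-2$ at the origin. The droplet in $u$ is the disk $|u-t|\le 1/\sqrt d$ (consistent with \eqref{def of eq msr}), and it contains the origin exactly when $t\le 1/\sqrt d$; thus the two regimes in the theorem correspond to whether the singularity lies outside or inside the effective droplet.

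\textbf{Closed-form expression for the norms.} Next I would derive an exact expression for $h_{dk+\ell}$. In polar coordinates $u=\rho e^{i\psi}$, the angular integral of $e^{2nt\rho\cos\psi}$ produces modified Bessel functions $I_m(2nt\rho)$, and Gram matrix entries in the $\ell$-block reduce to integrals of $\rho^{j+k+2\alpha_\ell-1}e^{-n\rho^2}I_{|j-k|}(2nt\rho)d\rho$. Using the series for $I_m$ and evaluating the resulting Gaussian moments, the entries are expressible through confluent hypergeometric functions ${}_1F_1(\cdot\,;\cdot\,;nt^2)$; a triangular factorization of the Gram matrix then yields
$$h_{dk+\ell} = \frac{e^{nt^2}}{d\,n^{k+\alpha_\ell}}\,\mathcal{I}_\ell(k,n),$$
where $\mathcal{I}_\ell(k,n)$ is, up to combinatorial factors, an incomplete gamma function with argument $nt^2$ and parameter determined by $k+\alpha_\ell$. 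For $k+\alpha_\ell\ll nt^2$ one has $\mathcal{I}_\ell(k,n)=\Gamma(k+\alpha_\ell)(1+O(e^{-\delta n}))$; for $k+\alpha_\ell\gg nt^2$, $\mathcal{I}_\ell(k,n)$ is exponentially small; the transition occurs over a window of width $O(\sqrt n)$ around $k\approx nt^2$ and is controlled by uniform (Temme-type) asymptotics of the incomplete gamma function.

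\textbf{Summation and Euler--Maclaurin.} Writing $n=d\lfloor n/d\rfloor+r$ with $0\le r<d$, the $\ell$-block contains $N_\ell=\lfloor n/d\rfloor + \mathbbm{1}_{\ell<r}$ polynomials, so
$$\log Z_n = \log n! + \sum_{\ell=0}^{d-1}\sum_{k=0}^{N_\ell-1} \log h_{dk+\ell}.$$
The key algebraic identity $\prod_{k=0}^{N-1}\Gamma(k+a) = G(N+a)/G(a)$, combined with the classical asymptotic expansion of $\log G$ at large argument, converts the dominant gamma-function contribution into an explicit polynomial-plus-logarithmic expansion in $N_\ell$, with one copy of $\zeta'(-1)$ produced per block (hence the prefactor $d$ in $C_5$). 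In the multi-component regime $t>1/\sqrt d$, $\mathcal{I}_\ell$ is a complete gamma up to exponentially small error for every $k$, so the analysis is essentially algebraic; the $\ell$-dependence of $N_\ell$ enters quadratically through Stirling, producing the oscillatory constant $d\,\{n/d\}(\{n/d\}-1)\log(\sqrt d\,t)$, and the topology-dependent $(6-d)/12$ in $C_4$ arises from the $d$ subleading $O(\log n)$ contributions. In the conformal singularity regime $t<1/\sqrt d$, each inner sum splits at $k\approx nt^2$: the part below follows a critical version of the previous analysis, the transition window contributes through uniform incomplete-gamma asymptotics, and the \emph{fixed} endpoint $k=0$ of each block (where the singularity factor $|u|^{2\alpha_\ell-2}$ dominates) contributes the constant $-\log G((\ell+1+c)/d)$; summing over $\ell$ produces both the additional $C_4$ contribution $(d-1)^2/(12d)-c(d-c-1)/(2d)$ and the Barnes-$G$ sum in $C_5$.

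\textbf{Main obstacle.} The principal difficulty is the uniform control of $\log\mathcal{I}_\ell(k,n)$ across the transition window in the conformal singularity regime: a naive piecewise Euler--Maclaurin produces a spurious $O(\log n)$ mismatch, and the Temme uniform asymptotics, together with careful tracking of Euler--Maclaurin correction terms, are required to recover the exact constants in $C_4$ and $C_5$. A secondary but nontrivial point is the bookkeeping of the $d$ unequal blocks in the multi-component regime, where the mismatch $N_\ell - n/d$ must be tracked simultaneously through the Stirling and $\log G$ expansions in order to isolate the $\{n/d\}(\{n/d\}-1)$ dependence; specializing to $c=0$ and $d\mid n$ (where $\{n/d\}=0$ and the oscillatory term vanishes) then recovers the prediction of Dea\~no--Simm.
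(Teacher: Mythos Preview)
Your reduction via rotational symmetry to $d$ shifted-Ginibre blocks with point-charge weights $|u|^{2\alpha_\ell-2}e^{-n|u-t|^2}$ is correct and is exactly how the paper begins (Lemma~\ref{Lem_multifold}). The genuine gap is the next step: for general real exponent $\alpha_\ell=(\ell+1+c)/d$, the planar orthogonal norms $h_k^{(\alpha_\ell-1)}$ do \emph{not} admit a closed form in terms of incomplete gamma functions. The Gram matrix entries you write down, $\int_0^\infty \rho^{j+k+2\alpha_\ell-1}e^{-n\rho^2}I_{|j-k|}(2nt\rho)\,d\rho$, are confluent hypergeometric in each entry, but the LDU factorization does not collapse to anything elementary unless the exponent is a non-negative integer; the paper itself stresses (in the remark following Lemma~\ref{Lem_asymp of mathcal A3}) that the incomplete-gamma representation is special to $c=1$. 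Consequently the whole Euler--Maclaurin/Temme program you outline cannot get started for generic $c>-1$, and in particular the delicate transition-window analysis you flag as the ``main obstacle'' is not even formulated on firm ground.

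What the paper does instead is avoid summing individual norm asymptotics altogether. After the rotational reduction, it recognises each block product $\prod_{k=0}^{N-1}h_k^{(\gamma_{\ell,c}/2)}$ as (up to the Ginibre normaliser) the moment $\mathbb{E}|\det(G_N-a)|^{\gamma_{\ell,c}}$ of a characteristic polynomial. The asymptotics of this moment are obtained not by summation but by a differential identity in $a$ (Eq.~\eqref{eqn for tau derivative}, from $\tau$-function theory) combined with the large-$N$ Riemann--Hilbert asymptotics of $P_N^{(c)}$ near $z=\infty$ from \cite{LY17,BY23}: one integrates $\partial_a\log\mathbb{E}|\det(G_N-a)|^{2c}=2N\mathfrak{A}(a)$ from the explicitly solvable endpoints $a=0$ or $a=\infty$. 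This sidesteps any transition-window analysis and delivers the $O(n^{-1})$ error directly. The only place individual norms are needed is for the at most $d-1$ ``leftover'' highest-degree terms (your $N_\ell$ mismatch), and there the single-$k$ RH asymptotics \eqref{asymptotic op norm d1 out}--\eqref{asymptotic op norm d1 bulk} suffice. So your overall architecture (symmetry reduction, block-wise analysis, tracking the residue $r=n\bmod d$) matches the paper, but the analytic engine must be Riemann--Hilbert rather than incomplete-gamma special-function identities.
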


\begin{rem}[Comparison with Deaño--Simm conjecture \cite{DS22}]
In \cite{DS22}, an asymptotic formula for \( Z_n(V_{d,t}^{(c)}) \) was proposed in the special case \( n = dN \) and \( c = 0 \), based on the fluctuation theory \cite{AHM11} for general random normal matrix ensembles. Here, we compare our result with their prediction. Following \cite{DS22}, write $Z_{n}^{ \mathrm{Lem}_{d} }(t) := Z_n( V_{d,t}  )$.
Then it was proposed in \cite[Eq.~(3.46) and Appendix A]{DS22}\footnote{There appears to be a minor sign error in \cite[Eq.~(3.46)]{DS22}. In that expression, the exponent of the $(t/t_c)$ term in \eqref{asymp 1 corrected} is given as $+N(d-1)$. However, according to the derivation provided in \cite[Appendix~A]{DS22}, the exponent should follow from  
$$
\sum_{ \ell=0 }^{d-1} N \gamma_\ell = -N(d-1) ,\qquad \gamma_{\ell} = -2 \Big( 1-\frac{\ell+1}{d} \Big) , 
$$
see \cite[Eq.~(A.7)]{DS22}.} that for $t>t_c=1/\sqrt{d}$, 
\begin{align} \label{asymp 1 corrected}
	 Z_{dN}^{\mathrm{Lem}_{d}}(t)  = \big(Z_{N}^{\mathrm{Lem}_{1}}(t\sqrt{d})\big)^{d}\,  c_{N,d}\Big( \frac{t}{t_{c}} \Big)^{-N(d-1)}\bigg( 1-\Big( \frac{t_{c}}{t} \Big)^{2} \bigg)^{-\kappa_{d}} (1+o(1)),  
\end{align}
as $N \to \infty$, where  
\begin{equation} \label{def of kappa d}
c_{N,d} = \frac{(Nd)!}{ (N!)^d } \frac{1}{d^{N(dN+2d+1)/2}}  , \qquad \kappa_{d}:=\frac{(d-1)(2d-1)}{6d}. 
\end{equation} 
Note that by definition, 
\begin{equation} 
Z_{N}^{\mathrm{Lem}_{1}}(t\sqrt{d}) =  e^{d(Nt)^{2}}  Z_{N}^{\mathrm{Gin}},  
\end{equation} 
where $Z_N^{ \rm Gin } $ is the partition function of the Ginibre ensemble, see \eqref{def of ZN Ginibre}. 
By using the Stirling's formula \eqref{gamma asymp}, direct computation gives
\begin{align}\label{asymp 2}
\log c_{N,d} = -\frac{d \log d}{2}N^{2} - \frac{\log d}{2}N - \frac{d-1}{2}\log N + \frac{\log d - (d-1)\log(2\pi)}{2} +  O(N^{-1}) , 
\end{align}
as $N\to\infty$. Then by using the asymptotic behaviour \eqref{Barnes G asymp} of the Barnes $G$-function, after straightforward computations, one can check that \eqref{asymp 1 corrected} is consistent with the formula \eqref{log Z expansion main} for the special case $n=dN$ and $c=0$.
\end{rem}

\begin{rem}[Rotationally symmetric case]
For $t = 0$, the potential \eqref{def of Vc lemniscate} becomes rotationally symmetric. In this case, the monomials form an orthogonal system with respect to the rotationally symmetric weight. It follows from \eqref{Zn in terms of det} and the integral evaluation 
$$
\int_{ \C } |z|^{2j} e^{-n V_{d,0}^{(c)}(z)} \frac{d^2z}{\pi} =  \int_{ \C } |z|^{2j+2c} e^{ -n |z|^{2d} }  \frac{d^2z}{\pi}  = \frac{ \Gamma( \frac{j+c+1}{d} ) }{ d n^{ \frac{j+c+1}{d} } } 
$$
that 
\begin{align}
Z_n(V_{d,0}^{(c)})  = \frac{n!}{d^n \, n^{ \frac{n(n+2c+1)}{2d} }  } \prod_{j=0}^{n-1} \Gamma(\tfrac{j+c+1}{d}). 
\end{align}
Then with $n=dN+m$, by using 
\begin{equation} \label{Barnes G def}
G(z+1)=\Gamma(z)G(z),\qquad G(1)=1, 
\end{equation}
we have 
\begin{align}
\begin{split}  \label{Zn t0 Barnes G}
Z_{n}(V_{d,0}^{(c)}) & =  \frac{n!}{d^n \, n^{ \frac{n(n+2c+1)}{2d} }  }  \prod_{\ell=0}^{d-1} \prod_{k=0}^{N-1} \Gamma(k+\tfrac{\ell+c+1}{d}) \prod_{\ell=0}^{m-1} \Gamma(N+\tfrac{\ell+c+1}{d}) 
\\ 
& = \frac{n!}{d^n \, n^{ \frac{n(n+2c+1)}{2d} }  }   \prod_{\ell=0}^{d-1} \frac{ 1 }{ G( \frac{\ell+c+1}{d} ) }  \prod_{\ell=0}^{m-1} G(N+1+\tfrac{\ell+c+1}{d}) \prod_{\ell=m}^{d-1} G(N+\tfrac{\ell+c+1}{d}). 
\end{split}
\end{align} 
Combining \eqref{Zn t0 Barnes G} and the asymptotic behaviour \eqref{Barnes G asymp} of the Barnes $G$-function, one can directly check that \eqref{log Z expansion main} holds in the special case $t = 0$.
\end{rem}

\begin{rem}[Oscillatory term] In Hermitian random matrix theory, it is well known that in the multi-cut regime, where the limiting spectrum consists of multiple disconnected intervals, the oscillatory terms in the free energy expansion can be expressed in terms of the Riemann theta function (see e.g. \cite{CFWW25,CGM15,BG24} and references therein). In particular, simplifications such as quasi-periodicity occur when the equilibrium measure distributes equal mass across each component of the support.

A description based on theta functions also arises in the study of non-Hermitian random matrix models under rotational symmetry, see \cite{ACC23,Ch22,Ch23,ACCL24,ACC23a}.
 In our current setting—the lemniscate ensembles—the enhanced symmetries lead to drastic simplifications of the oscillatory terms, reducing them essentially to the elementary form \( x(x - 1) \), where \( x = \{n/d\} \). Nonetheless, one can observe structural similarities with the general theta-function structure in previous works. 
More precisely, in the two-dimensional setting, the oscillatory contribution has been computed in the radially symmetric case, where it admits an explicit expression in terms of a theta-like function:
\begin{equation} 
\Theta(x; \rho, a) = x(x - 1)\log(\rho) + x\log(a) 
+ \sum_{j=0}^{\infty} \log\Big(1 + a \, \rho^{2(j + x)}\Big) 
+ \sum_{j=0}^{\infty} \log\Big(1 + a^{-1} \, \rho^{2(j + 1 - x)}\Big),
\end{equation} 
see \cite[Eq.~(3.77)]{Ch23}. This comparison motivates the conjecture that the lemniscate ensemble corresponds to a degenerate case in which only the leading \( x(x - 1) \) term survives. At present, no general result confirms this degeneracy, and developing a full theoretical understanding remains an open problem for future investigation.
\end{rem}

\begin{rem}[Full asymptotic expansion in powers of $1/N$]
For the regime where $t < 1/\sqrt{d}$, the error term $O(n^{-1})$ in \eqref{log Z expansion main} can indeed be explicitly computed. To be more concrete, the full asymptotic expansion in powers of $1/n$ can be explicitly computed and expressed in terms of the Bernoulli number. This aspect will be discussed in more detail for the $d=1$ case, see Theorem~\ref{Prop_moment GinUE} and Remark~\ref{Rem_full expansion}.  
\end{rem}

We note that, due to the $n$-dependence of the potential in \eqref{def of Vc lemniscate}, Theorem~\ref{Thm_free energy expansion} does not fall within the class of cases where the general asymptotic formula \eqref{Z expansion} applies. However, in the special case $c=0$, one can verify the conjectural asymptotic form \eqref{Z expansion}. As a consequence of Theorem~\ref{Thm_free energy expansion}, we have the following.

\begin{cor} \label{Cor_free energy}
Let $W=V_{d,t}$ with $d \in \mathbb{N}$. Then the free energy expansion \eqref{Z expansion} holds with the equilibrium measure \eqref{def of eq msr}, where $\mathcal{F}$, $\mathcal{G}_n $ and $\mathcal{H}_n$ are given as follows.
\begin{itemize}
    \item For $t > 1/\sqrt{d}$, we have $\chi=d$, $\mathcal{H}_n[W] = 0$ and 
\begin{align}
\mathcal{F}[W] &=     -\frac{(d-1)(2d-1)}{6d}  \log\Big( \frac{dt^2-1}{dt^2} \Big)  + \frac{d}{12} \log d , \\
\mathcal{G}_n[W]  & =  d \, \Big\{ \frac{n}{d} \Big\} \Big( \Big\{ \frac{n}{d} \Big\} -1 \Big)  \log( \sqrt{d}t ). 
\end{align}  
\item For $t< 1/\sqrt{d}$, we have $\chi=1$, $\mathcal{G}_n[W] = 0$ and 
\begin{align}
\mathcal{F}[W] &= \Big( \frac{d}{12} - \frac{(d-1)(2d-1)}{12d}  \Big) \log d , 
\\
\mathcal{H}_n[W] & = \frac{(d-1)^2}{12d}\log n + (d-1)\Big( \zeta'(-1)-\frac{\log(2\pi)}{4} \Big) - \sum_{\ell=0}^{d-1} \log G\Big(  \frac{\ell+1}{d} \Big).  \label{def of mathcal Hn bulk}
\end{align} 
\end{itemize} 
\end{cor}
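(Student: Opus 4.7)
The plan is to specialize Theorem \ref{Thm_free energy expansion} to $c=0$ and match its five explicit coefficients against the conjectural expansion \eqref{Z expansion} term by term, reading off $\mathcal{F}[W]$, $\mathcal{G}_n[W]$ and $\mathcal{H}_n[W]$ from the residuals. The topological input is immediate: $\chi(S_V)=d$ when $t>1/\sqrt{d}$ (disjoint union of $d$ topological discs) and $\chi(S_V)=1$ when $t<1/\sqrt{d}$ (simply connected). The only non-trivial computational inputs are the logarithmic energy $I_{V_{d,t}}[\sigma_V]$ and the integral $\int_\C \log(\Delta V_{d,t})\,d\sigma_V$.

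Both potential-theoretic quantities are accessed through the branched covering $z \mapsto w = z^d$, which in either regime pushes $\sigma_V$ forward to $d\,\mathbbm{1}_{B(t,1/\sqrt{d})}\,d^2w/\pi$: the map is either $d$ bijective branches onto $B(t,1/\sqrt{d})$ (multi-component regime) or a degree-$d$ branched cover (conformal-singularity regime). For $\int \log(\Delta V_{d,t})\,d\sigma_V = 2\log d + (2d-2)\int \log|z|\,d\sigma_V$, this reduces the remaining piece to $\int_{B(t,1/\sqrt{d})} \log|w|\,d^2w/\pi$: when $t>1/\sqrt{d}$ the origin lies outside the disc and the mean value property for the harmonic function $\log|w|$ yields $\log t / d$; when $t<1/\sqrt{d}$ I would split at $|v|=t$ and apply the standard identity $\frac{1}{2\pi}\int_0^{2\pi} \log|t+se^{i\psi}|\,d\psi = \max(\log t, \log s)$. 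Substituting back verifies the $O(n)$ coefficient in \eqref{Z expansion}.

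For the logarithmic energy I would use the Euler--Lagrange relation $2\,I_{V_{d,t}}[\sigma_V] = \ell + \int V_{d,t}\,d\sigma_V$, where the modified Robin constant $\ell$ is fixed by $-2\int\log|z-w|\,d\sigma_V(w) + V_{d,t}(z) = \ell$ on $S_V$. The mean $\int V_{d,t}\,d\sigma_V$ evaluates to $1/(2d)-t^2$ via the same change of variable using $V_{d,t}(z)=|z^d-t|^2-t^2$. The Robin constant is extracted by evaluating the logarithmic potential at the real boundary point $z_0$ with $z_0^d = t + 1/\sqrt{d}$. The key trick is the factorization $z_0^d - w^d = \prod_{k=0}^{d-1}(z_0 - \omega^k w)$, $\omega=e^{2\pi i/d}$: combined with the $\omega$-rotation invariance of $\sigma_V$, it collapses the a priori two-dimensional lemniscate integral $\int \log|z_0-w|\,d\sigma_V(w)$ to the single disc integral $\int_{B(t,1/\sqrt{d})}\log|z_0^d - u|\,d^2u/\pi$ with origin outside, handled by the power-series expansion of $\log|1-\sqrt{d}(u-t)|$ about the disc center. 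This yields $\ell = \log d/d + 1/d - t^2$ and hence $I_{V_{d,t}}[\sigma_V] = -C_1$.

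With these two inputs in hand, the first three terms of \eqref{Z expansion} match those of \eqref{log Z expansion main}, and comparing $C_4$ and $C_5$ against $\tfrac{6-\chi}{12}\log n + \chi\,\zeta'(-1) + \tfrac{\log(2\pi)}{2}$ identifies the remaining pieces: the fractional-part oscillation $d\{n/d\}(\{n/d\}-1)\log(\sqrt{d}\,t)$ is absorbed into $\mathcal{G}_n$ in the multi-component regime (with $\mathcal{H}_n=0$ there since $C_4=(6-d)/12$ already), while in the conformal-singularity regime the extra $\log n$ contribution $(d-1)^2/(12d)$ beyond $(6-\chi)/12=5/12$ together with the Barnes-$G$ constants is absorbed into $\mathcal{H}_n$ (with $\mathcal{G}_n=0$ there). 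The main obstacle is the Robin constant computation: without the algebraic factorization of $z^d-w^d$ one would face a genuinely two-dimensional lemniscate integral, whereas the $\omega$-symmetry reduces it to an elementary disc computation.
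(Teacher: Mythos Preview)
Your proposal is correct and follows essentially the same route as the paper: specialize Theorem~\ref{Thm_free energy expansion} to $c=0$, verify that $C_1$ and $C_3$ match the energy and entropy terms in \eqref{Z expansion}, and read off $\mathcal{F}$, $\mathcal{G}_n$, $\mathcal{H}_n$ from what remains of $C_4$ and $C_5$. The entropy computation is identical in spirit (change of variable $w=z^d$ reducing $\int\log(\Delta V)\,d\sigma_V$ to a disc integral of $\log|w|$; the paper then quotes Jensen's formula~\eqref{2.42} where you split cases by hand).

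The only genuine difference is in the energy computation. You compute the Robin constant directly at a boundary point, using the factorization $z_0^d-w^d=\prod_k(z_0-\omega^k w)$ together with the $\omega$-invariance of $\sigma_V$ to collapse the lemniscate log-potential to a disc integral. The paper instead invokes the structural identities $F_V=F_Q/d$ and $\int V\,d\sigma_V=\tfrac1d\int Q\,d\sigma_Q$ for $V(z)=\tfrac1dQ(z^d)$ (cited from \cite{BM15}), which immediately give $I_V[\sigma_V]=\tfrac1d I_Q[\sigma_Q]$; since $Q_t(z)=d|z-t|^2-dt^2$ is a translated and scaled Gaussian with known energy $I_{Q_t}=\tfrac34-dt^2+\tfrac{\log d}{2}$, the result follows without any boundary evaluation. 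Your route is self-contained and makes the mechanism explicit; the paper's route is shorter because it defers to an existing scaling relation. Both exploit the same $d$-fold symmetry and arrive at the same $I_V=-C_1$.
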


\begin{rem}[Divergence of the spectral Laplacian in the conformal singularity] \label{Rem_diverging Laplacian}
In the conformal singularity regime, the most notable feature is the appearance of an additional $O(\log n)$ term in \eqref{def of mathcal Hn bulk}, which lies beyond the scope of the original conjecture by Jancovici et al. We now outline a formal argument indicating how such a term naturally arises.

As mentioned earlier, the $O(1)$-term in the free energy expansion is believed to correspond to the zeta-regularised determinant of the spectral Laplacian~\cite{ZW06}. However, in the presence of conformal singularity—as is the case here—this determinant diverges. More precisely, the $O(1)$-term in~\cite{ZW06} includes a contribution of the form \cite[Eqs.~(4.8), (4.9), (5.27)]{ZW06}
\begin{align} \label{eqn for diverging Laplacian}
\frac{1}{12} \int_{ S_V } | \nabla \varphi |^2 \, \frac{d^2z}{\pi}, 
\end{align}
where $\varphi$ is given by \eqref{def of conformal metric}. (See also~\cite[Eq.~(C.7)]{ZW06} and~\cite[Eq.~(1.31)]{BKS23} for explicit expressions in the rotationally symmetric setting.)

In our case, the conformal factor $\varphi$ satisfies $\nabla \varphi(z) = (d-1)/z$. Since $S_V$ contains the origin when $t < 1/\sqrt{d}$, the integral \eqref{eqn for diverging Laplacian} diverges due to the singularity at the origin.
Nevertheless, the natural regularisation scale is the exclusion of a neighbourhood of radius $O(n^{-1/(2d)})$ around the origin, which corresponds to the microscopic scaling of the potential \eqref{def of V lemniscate}, see e.g.~\cite{AKS23}. Thus, we obtain the regularised form
$$
\frac{1}{12} \int_{ S_V \cap \{ |z| >c n^{ -\frac{1}{2d} } \} } | \nabla \varphi(z) |^2 \, \frac{d^2z}{\pi} = \frac{(d-1)^2}{12d} \log n +O(1), 
$$
which matches the additional $O(\log n)$ term appearing in \eqref{def of mathcal Hn bulk}.
We refer to \cite[Eq.~(1.35)]{BKS23} for a similar discussion concerning the appearance of such logarithmic terms in the presence of diverging spectral determinant. (We also refer the reader to \cite{JV23, Kl17} for related models in which the coefficient of the $\log n$ term depends on the geometric singularities of the system.)
\end{rem}

\begin{rem}[Vanishing or diverging densities of the equilibrium measure]
Continuing the discussion from the previous remark, observe that the function
\begin{equation}
d \mapsto \frac{(d-1)^2}{12d} = \frac{1}{12}\Big( d+\frac{1}{d}\Big)-\frac{1}{6}, 
\end{equation}
which appears in the $O(\log n)$ term of \eqref{def of mathcal Hn bulk}, is invariant under the transformation $d \mapsto 1/d$. This suggests that a vanishing equilibrium density of order \( O(|z|^{2d-2}) \) has the same effect on the \( O(\log n) \) term in the free energy expansion as a diverging density of order \( O(|z|^{2/d - 2}) \).

Although the class of potentials considered in \eqref{def of V lemniscate} does not include cases with diverging equilibrium densities, we may still illustrate this dual behaviour through a simple yet instructive example of the potential:
\begin{equation}
\widetilde{V}_d(z) := |z|^{ 2/d }, \qquad d \in \mathbb{N}. 
\end{equation} 
By using \eqref{Zn in terms of det}, \eqref{Barnes G def} and the multiplication formula of the gamma function \cite[Eq.~(5.5.6)]{NIST}, we have
\begin{equation}
Z_n(\widetilde{V}_d)  = n!  \frac{ d^{ dn(n+1)/2+n/2 } }{ n^{ dn(n+1)/2 } }  (2\pi)^{n (1-d)/2 }   \prod_{ \ell=0 }^{d-1}  \frac{ G(n+1+\frac{\ell}{d}) }{ G(1+\frac{\ell}{d}) } .
\end{equation} 
Then it follows from \eqref{Barnes G asymp} that   
\begin{align}
\begin{split}
\log Z_n(\widetilde{V}_d) & = \Big( \frac{d\log d}{2}-\frac{3d}{4} \Big) n^2+\frac12 n \log n  +\Big( \frac{\log(2\pi)}{2}-1 +\frac{d+1}{2}\log d- \frac{d-1}{2} \Big)n 
\\
& \quad + \Big( \frac{5}{12} + \frac{(d-1)^2}{12d}  \Big) \log n + d \zeta'(-1)+\frac{d+1}{4} \log (2\pi)- \sum_{\ell=1}^{d-1} \log G(1+\tfrac{\ell}{d}) +O(\frac{1}{n}). 
\end{split}
\end{align}
This computation again reveals the same coefficient in the $O(\log n)$ term, thereby reinforcing the observed duality between vanishing and diverging densities in the conformal setting.
\end{rem}

Combining Corollary~\ref{Cor_free energy} with the preceding two remarks, we are led to propose an extended version of the conjecture of Jancovici et al. \cite{JMP94, TF99} in the presence of singularities. Specifically, consider the case where the equilibrium measure exhibits a singularity at a single point \( p \in S_W \), whose vanishing or diverging behaviour is of the form \( O(|z - p|^{2d - 2}) \) or \( O(|z - p|^{2/d - 2}) \) for some \( d \in \mathbb{N} \).
Then, the coefficient of the \( \log n \) term in the free energy expansion is conjectured to be
\begin{equation}
\frac{6 - \chi}{12} + \frac{(d - 1)^2}{12d},
\end{equation}
where \( \chi \) denotes the Euler characteristic of the droplet. 
In the regular case \( d = 1 \), this reduces to the original conjecture of Jancovici et al.

\bigskip 

We conclude this section by discussing the special case \( d = 1 \) and its connection to non-Hermitian random matrix theory.
In fact, the lemniscate ensemble in the case \( d = 1 \), where no additional discrete rotational symmetry is imposed, corresponds—up to a trivial translation—to the classical non-Hermitian random matrix ensemble known as the complex Ginibre ensemble, see~\cite{BF25} for a recent review. This ensemble consists of \( n \times n \) random matrices with i.i.d.\ complex Gaussian entries of mean zero and variance \( 1/n \). The joint distribution of their eigenvalues is given by the Coulomb gas form \eqref{def of Gibbs} with the Gaussian potential \( V^{\mathrm{g}}(z) := |z|^2 \). Observe that \( V_{1,t}(z) = V^{\mathrm{g}}(t - z) \). In this case, it is well known that the partition function admits an explicit evaluation
\begin{equation} \label{def of ZN Ginibre}
 Z_{N}^{\mathrm{Gin}} := Z_N( V^{ \rm g } ) =     N^{ -\frac{N(N+1)}{2} } \prod_{k=1}^{N}k! =    N^{ -\frac{N(N+1)}{2} }  \,G(N+2).
\end{equation} 

In the special case \( d = 1 \), the partition function \( Z_n(V_{d,t}^{(c)}) \) is equivalent--up to a normalisation--to the averaged moment of the characteristic polynomial of the complex Ginibre ensemble. 
We now present the result for \( d = 1 \) from the perspective of non-Hermitian random matrix theory.

\begin{thm}[\textbf{Moments of characteristic polynomials for the complex Ginibre matrix}] \label{Prop_moment GinUE}
Fix \( \gamma > -2 \), and let \( a \in \mathbb{C} \). Let \( G_N \) denote the complex Ginibre matrix.
\begin{itemize}
    \item[\textup{(i)}] Let $|a|>1$ be fixed. Then as $N \to \infty$, we have 
\begin{equation} \label{asymp of moments of GinUE}
\log \mathbb{E}\Big( |\det(G_{N}-a)|^{\gamma} \Big) = (\gamma \log |a|) \,N -\frac{ \gamma^2 }{4}  \log\Big( \frac{|a|^2-1}{|a|^2} \Big) +O( \frac{1}{N} ).  
\end{equation} 
      \item[\textup{(ii)}] Let $|a|<1$ be fixed. Then as $N \to \infty$, we have 
\begin{equation} \label{asymp of moments of GinUE bulk}
\log \mathbb{E}\Big( |\det(G_{N}-a)|^{\gamma} \Big) =  \frac{\gamma}{2} (|a|^2-1) N + \frac{\gamma^2}{8} \log N + \frac{\gamma}{4} \log(2\pi) - \log G(1+\tfrac{\gamma}{2})  +O(\frac{1}{N}),
\end{equation}
where $G$ is the Barnes $G$-function.
Furthermore, the $O(N^{-1})$-term in \eqref{asymp of moments of GinUE bulk} is given by $\sum_{m=1}^\infty \mathcal{C}_m N^{-m}$, where 
\begin{align}
\begin{split} \label{def of mathcal Cm}
\mathcal{C}_m & := (-1)^{m+1}\Big( \frac{ \gamma^2 }{4m(m+1)(m+2)}-\frac{1}{12m} \Big) \Big( \frac{\gamma}{2}\Big)^m
\\
&\quad + (-1)^m \bigg( \sum_{k=1}^{ \lfloor(m-1)/2\rfloor } \frac{B_{2k+2}}{ 4k(k+1)(2k-1)! } \frac{(m-1)!}{(m-2k)!} \Big( \frac{\gamma}{2} \Big)^{m-2k} \bigg). 
\end{split}
\end{align}
Here, the second line in \eqref{def of mathcal Cm} is understood to be zero for \( m = 1, 2 \), and \( B_k \) is the \( k \)-th Bernoulli number defined in terms of the generating function as 
\begin{equation} \label{def of Bernoulli number}
\frac{t}{e^t-1} = \sum_{k=0}^\infty B_k \frac{t^k}{k!}.
\end{equation}
\end{itemize} 
\end{thm}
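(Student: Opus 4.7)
The plan is to derive both parts from Theorem~\ref{Thm_free energy expansion} specialised at \(d=1\), combined with the explicit formula \eqref{def of ZN Ginibre} for the Ginibre partition function. The opening step is a translation of the Coulomb-gas integral. Since the law of \(G_N\) is invariant under \(G_N \mapsto e^{i\theta} G_N\), the quantity \(\mathbb{E}|\det(G_N-a)|^{\gamma}\) depends only on \(t := |a|\), so we may assume \(a = t \ge 0\). Writing the moment as a Coulomb-gas integral and substituting \(z_j = w_j + t\) yields
\begin{equation*}
	\mathbb{E}\,|\det(G_N - a)|^{\gamma} \;=\; \frac{e^{-N^2 t^2}}{Z_N^{\mathrm{Gin}}}\; Z_N\!\left(V_{1,t}^{(\gamma/2)}\right),
\end{equation*}
which reduces the problem to the asymptotic analysis of the lemniscate partition function with \(d=1\) and \(c=\gamma/2\).

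Applying Theorem~\ref{Thm_free energy expansion} in the appropriate regime—multi-component for \(t>1\), conformal-singularity for \(t<1\)—and combining with the Barnes-\(G\) expansion of \(\log Z_N^{\mathrm{Gin}} = -\tfrac{N(N+1)}{2}\log N + \log G(N+2)\) produces the stated leading asymptotics. The prefactor \(-N^2 t^2\) exactly cancels the \(t^2 N^2\) part of \(C_1\); the \(N^2\), \(N\log N\), and (in the exterior case) \(\log N\) contributions from \(\log Z_N(V_{1,t}^{(\gamma/2)})\) and \(\log Z_N^{\mathrm{Gin}}\) cancel pairwise; and the oscillatory term \(\{N/d\}(\{N/d\}-1)\log(\sqrt{d}\,t)\) vanishes identically for \(d=1\). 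Reading off the surviving coefficients yields \eqref{asymp of moments of GinUE} in the exterior regime and the leading part of \eqref{asymp of moments of GinUE bulk} in the bulk regime.

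For the full \(1/N\) series in part (ii), the key point is that the coefficients \(\mathcal{C}_m\) are independent of \(a\). One can establish this by first carrying out the computation at the radially symmetric point \(a = 0\), where formula \eqref{Zn t0 Barnes G} specialises to
\begin{equation*}
	Z_N(V_{1,0}^{(c)}) \;=\; \frac{N!}{N^{N(N+2c+1)/2}}\,\frac{G(N+c+1)}{G(c+1)}.
\end{equation*}
Inserting this together with \(Z_N^{\mathrm{Gin}}\) into the reduction identity reduces the task to expanding \(\log G(N+c+1) - \log G(N+1)\) to all orders in \(N^{-1}\). Employing the full Barnes-\(G\) asymptotic with Bernoulli tail \(\sum_{k\ge 1} B_{2k+2}/(4k(k+1) z^{2k})\) at \(z = N+c\) and \(z = N\), combined with \(\log(N+c) = \log N + \sum_{j\ge 1} (-1)^{j+1} c^j/(jN^j)\), produces the two summands of \eqref{def of mathcal Cm}: the first arises from expanding the polynomial-in-\(z\) part of the Barnes-\(G\) asymptotic at \(z = N+c\) around \(z = N\), and the Bernoulli double-sum from the tail.

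The main obstacle is the promotion of this \(a=0\) calculation to all \(|a|<1\), which amounts to showing that the full \(1/N\)-expansion of \(\log Z_N(V_{1,t}^{(\gamma/2)})\) is independent of \(t\) throughout the bulk regime—an assertion that lies beyond the \(O(1)\) resolution of Theorem~\ref{Thm_free energy expansion}. A natural route is to refine the Riemann--Hilbert analysis behind that theorem so as to isolate the local parametrix at the insertion point, whose translation invariance forces the subleading corrections to be \(t\)-independent. Alternatively, one can seek a Ward-type differential identity in \(t\) whose right-hand side is shown to be smaller than every negative power of \(N\) for \(t\) strictly inside the droplet. Either way, the analytic content is concentrated in this step; the Barnes-\(G\) computation and the substitutions of Steps~2--3 become mechanical once the \(t\)-independence is in hand.
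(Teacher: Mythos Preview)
Your approach is circular within the paper's logical structure. In this paper, Theorem~\ref{Thm_free energy expansion} is \emph{deduced from} Theorem~\ref{Prop_moment GinUE}, not the other way around: the decomposition of Lemma~\ref{Lem_multifold} produces a term $\mathcal{A}_2$ which is a sum of Ginibre characteristic-polynomial moments, and Lemma~\ref{Lem_asymp of mathcal A2} evaluates $\mathcal{A}_2$ by invoking Theorem~\ref{Prop_moment GinUE}. Hence specialising Theorem~\ref{Thm_free energy expansion} at $d=1$ to recover Theorem~\ref{Prop_moment GinUE} assumes what you are trying to prove.

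The paper's own argument is direct and, interestingly, is exactly the ``Ward-type differential identity'' route you list as a fallback. The key input is the $\tau$-function identity \eqref{eqn for tau derivative},
\[
\frac{d}{da}\log\mathbb{E}\big(|\det(G_N-a)|^{2c}\big)=2N\,\mathfrak{A},
\]
where $\mathfrak{A}$ is the subleading coefficient of the monic planar orthogonal polynomial $P_N^{(c)}$. The Riemann--Hilbert analysis of \cite{BY23} (for $|a|>1$) and \cite{LY17} (for $|a|<1$) supplies the strong asymptotics of $P_N^{(c)}$ near infinity, from which $\mathfrak{A}$ is read off. For $|a|>1$ one integrates from $a=\infty$ using the elementary boundary condition there. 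For $|a|<1$ the crucial fact is that \cite{LY17} gives $P_N^{(c)}(z)=z^N(z/(z-a))^c\bigl(1+O(N^{-\infty})\bigr)$, so $\mathfrak{A}=ac+O(N^{-\infty})$ and the derivative in $a$ is $2acN$ up to errors beyond every power of $N$. Integrating from $a=0$, where the exact Barnes-$G$ evaluation \eqref{asymp of moments of GinUE bulk Barnes G} is available, then yields the full $1/N$ series; your Barnes-$G$ computation at $a=0$ is precisely this endpoint evaluation. In short, the analytic content you correctly identified as the ``main obstacle'' is what the paper takes as its primary method, and the $O(N^{-\infty})$ control from \cite{LY17} is what makes the $t$-independence of the subleading coefficients rigorous rather than heuristic.
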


The first few terms of \eqref{def of mathcal Cm} are given as follows.
\begin{align} 
&\mathcal{C}_1 = \frac{ \gamma (\gamma^2-2) }{48},
& & \hspace{-6em} 
\mathcal{C}_2 =  - \frac{\gamma^2(\gamma^2-4)}{ 384 }, 
\\
&\mathcal{C}_3 = \frac{\gamma  (24 - 20 \gamma^2 + 3 \gamma^4 )}{5760 }, 
& & \hspace{-6em}  \mathcal{C}_4 = -\frac{\gamma^2 (24 - 10 \gamma^2 + \gamma^4)}{7680}.  
\end{align}
The asymptotic formula \eqref{asymp of moments of GinUE bulk} in the regime \( |a| < 1 \) was previously obtained by Webb and Wong in~\cite[Theorem~1.1]{WW19}, with an error term of order \( o(1) \). In this respect, our expansion \eqref{asymp of moments of GinUE bulk} provides a refinement of~\cite[Theorem~1.1]{WW19} by supplying the full asymptotic expansion in powers of $1/N$, see also Remark~\ref{Rem_full expansion}. 
Beyond this regime, the moments of characteristic polynomials for complex Ginibre matrices have been investigated in various settings: in \cite{DS22} the edge regime \( |a| \approx 1 \) was studied under fixed exponent; whereas in \cite{BSY24}, the focus lies on the exponentially varying regime, where the exponent scales as \( O(N) \). Moreover, recent extensions of such results have appeared in the context of induced spherical ensembles \cite{BFL25,BFKL25} and truncated unitary ensembles \cite{DMMS25}.

In addition, we note that since Theorem~\ref{Prop_moment GinUE} holds for general real-valued $\gamma >-2$, it directly implies a central limit theorem for the properly normalised logarithm of characteristic polynomials, cf. \cite[Corollary 1.2]{WW19} and \cite[Corollary 1.6]{DMMS25}.

\begin{rem}[Duality for the even integer powers]
For $\gamma = 2k$ with $k$ an integer, Theorem~\ref{Prop_moment GinUE} (i) can be derived via a duality (see \cite{Fo25} for a recent review). More precisely, in this case, the left-hand side of \eqref{asymp of moments of GinUE} admits a representation as a $k$-fold real integral, to which one can apply the standard steepest descent method to extract the asymptotic behaviour, see \cite[(A.7)]{DS22}. 
\end{rem}

\begin{rem}[Unified asymptotic formula in terms of the Barnes $G$-function] \label{Rem_full expansion}
In fact, during the proof of \eqref{asymp of moments of GinUE bulk}, we establish  the asymptotic expansion
\begin{equation} \label{asymp of moments of GinUE bulk Barnes G}
\log \mathbb{E}\Big( |\det(G_{N}-a)|^{\gamma} \Big)  = \log \bigg(  \frac{1}{ N^{ \gamma N/2  }  }   \frac{ G(N+1+\frac{\gamma}{2})  }{ G(N+1) G(  1+\frac{\gamma}{2}  ) } \bigg) + \frac{\gamma}{2}|a|^2 N + O( \frac{1}{N^\infty} ), 
\end{equation}
as \( N \to \infty \), where \( O(1/N^\infty) \) denotes an error term of order \( O(1/N^k) \) for any positive integer \( k \).  
From \eqref{asymp of moments of GinUE bulk Barnes G} and the asymptotic formula \eqref{Barnes G asymp}, the explicit full asymptotic expansion follows.  
Moreover, the asymptotic form \eqref{asymp of moments of GinUE bulk Barnes G} is believed to remain valid in a broader setting, for instance when \( c \) depends on \( N \), see \cite[Remark 2.8]{BSY24} for a related discussion.
\end{rem}

\subsection*{Strategy of the proofs} For the reader’s convenience, we outline the overall strategy of the proofs. The first elementary yet crucial idea in the proof of Theorem~\ref{Thm_free energy expansion} is to eliminate the discrete rotational symmetry, thereby reducing the partition function \( Z_n(V_{d,t}^{(c)}) \) essentially to the case \( d = 1 \). More precisely, in Lemma~\ref{Lem_multifold}, for \( m \in \{0,1,\dots, d-1\} \), we establish the decomposition
\[
\log Z_{dN+m}(V_{d,t}^{(c)}) = \mathcal{A}_1 + \mathcal{A}_2 + \mathcal{A}_3.
\]
Here, the first term \( \mathcal{A}_1 \) is a normalisation constant, expressed in terms of the partition function of the Ginibre ensemble. The second term \( \mathcal{A}_2 \) represents the main contribution, corresponding to the sum of moments of characteristic polynomials of the Ginibre ensemble. The third term \( \mathcal{A}_3 \) consists of the remaining oscillatory contributions, which can be expressed in terms of the orthogonal norms associated with the highest-degree polynomials.

The remaining steps involve analysing the asymptotic behaviour of each term \( \mathcal{A}_1 \), \( \mathcal{A}_2 \), and \( \mathcal{A}_3 \). The first term \( \mathcal{A}_1 \) is the most elementary and follows from the well-known asymptotic behaviour of the Barnes \( G \)-function, see Lemma~\ref{Lem_asymp pre factors}. The third term \( \mathcal{A}_3 \) is derived from the asymptotics of the orthogonal norms, based on recent developments in the associated Riemann--Hilbert analysis \cite{LY17,BY23}, see Lemma~\ref{Lem_asymp of mathcal A3}. The main term, \( \mathcal{A}_2 \), is treated using Theorem~\ref{Prop_moment GinUE}. To prove this theorem, we employ the idea of deforming partition functions developed in recent work \cite{BSY24}. The key idea is to use the rotationally symmetric cases \( a = 0 \) and \( a = \infty \), for which explicit closed-form expressions are available, as reference ensembles. By comparing with these, we analyse the deformation factor via the fine asymptotic behaviour of the orthogonal polynomials near infinity.

We emphasise that a notable feature in the asymptotic analysis is the presence of oscillatory terms. In Theorem~\ref{Thm_free energy expansion}, we observe that the final formula includes an oscillatory contribution in the case \( t > t_c \), while no such term appears when \( t < t_c \). In fact, each of the terms \( \mathcal{A}_1 \), \( \mathcal{A}_2 \), and \( \mathcal{A}_3 \) contains oscillatory components, which in our setting depend on \( m \), the residue of \( n \) modulo \( d \). Remarkably, an interesting computation reveals that for \( t < t_c \), all such oscillatory terms cancel out. In contrast, for \( t > t_c \), the oscillatory contribution survives at the \( O(1) \) level, resulting in the oscillations observed in the final expression. See Table~\ref{Table_asymptotic summary} for a summary.

 \begin{table}[h!]
 	\begin{center}
 		{\def\arraystretch{2}
 			\begin{tabular}{|cc||c|c||}
 				\hline
 				\multicolumn{2}{|c||}{  } &  $t>t_c$  &   $t<t_c$ 
 				\\
 				\hline 
 				\multicolumn{2}{|c||}{ \,\, $\mathcal{A}_1$ \,\, } &   \multicolumn{2}{c||}{ $  m \Big( N + \frac{ \log N }{ 2 }-\frac{\log(2\pi)}{2} +\frac{d-2c-1}{2d} \Big)  $  }          
                \\
 					\hline 
                    	\multicolumn{2}{|c||}{\,\, $\mathcal{A}_2$ \,\,  }  & \quad $   -m\frac{d-2c-1}{2d} +  m\frac{d-2c-1}{d}   \log(\sqrt{d}t)  $  &  \hspace{3.25em}  $   -m\frac{d-2c-1}{2d} $   \hspace{3.25em}  \\
 				\hline  
                    	\multicolumn{2}{|c||}{ \,\, $\mathcal{A}_3$ \,\,  }  &      $  -m \, \Big( N + \frac{ \log N }{ 2 }-\frac{\log(2\pi)}{2} \Big) +\frac{ m(m+1+2c-2d) }{d} \log( \sqrt{d}t )  $   &    \hspace{4em}   $ -m \, \Big( N + \frac{ \log N }{ 2 }-\frac{\log(2\pi)}{2} \Big)    $   \hspace{4em}     \\
 				\hline 
                    	\multicolumn{2}{|c||}{ $\displaystyle \sum_{k=1}^3 \mathcal{A}_k$ }  &  $ \frac{m(m-d)}{d} \log(\sqrt{d}t) $  &  \hspace{3.25em}  $0$  \hspace{3.25em}  \\
 				\hline  
 			\end{tabular}
 		}
 	\end{center}
 	\caption{Oscillatory (\(m\)-dependent) terms in the asymptotic expansions of \(\mathcal{A}_k\) for \(k = 1, 2, 3\) given in Lemmas~\ref{Lem_asymp pre factors}, \ref{Lem_asymp of mathcal A2} and \ref{Lem_asymp of mathcal A3}, respectively.} \label{Table_asymptotic summary}
 \end{table}

\section{Proofs of main results}

In this section, we prove the results in the previous section. We begin with Lemma~\ref{Lem_multifold}, which introduces a multi-fold transform that relates the general case \( d \in \mathbb{N} \) to the simpler case \( d = 1 \). Next, we establish Theorem~\ref{Prop_moment GinUE} by building on recent developments in the theory of associated planar orthogonal polynomials \cite{BY23,BSY24,LY17}. Theorem~\ref{Thm_free energy expansion} then follows by combining Lemma~\ref{Lem_multifold} and Theorem~\ref{Prop_moment GinUE} along with additional analysis on the squared orthogonal norms. Finally, we prove Corollary~\ref{Cor_free energy} by verifying that the coefficients in Theorem~\ref{Thm_free energy expansion} match the conjectural form given in the expansion~\eqref{Z expansion}. Specifically, we confirm that the explicitly computed coefficients $C_1$ and $C_3$ correspond to the expressions involving the logarithmic energy and entropy, cf. \eqref{C1 energy}.

\medskip 

We begin by recalling some well-known facts. Due to the special structure at \( \beta = 2 \), where the pairwise interaction in \eqref{def of Gibbs} yields the squared Vandermonde determinant, the partition function \( Z_n(W) \) admits a determinantal representation:
\begin{equation} \label{Zn in terms of det}
Z_n(W) = n! \,\det \bigg( \int_{\mathbb{C}} \mathcal{P}_j(z) \overline{\mathcal{P}_k(z)}\, e^{-nW(z)} \,\frac{d^2z}{\pi} \bigg)_{j,k=0}^{n-1},
\end{equation}
where \( \mathcal{P}_j \) denotes a monic polynomial of degree \( j \). In particular, if the \( \mathcal{P}_j \) are chosen to be orthogonal with respect to the weight \( e^{-nW(z)}\,\frac{d^2z}{\pi} \), then the partition function reduces to a product of squared orthogonal norms.

Recall that $V_{d,t}^{(c)}$ is given by \eqref{def of Vc lemniscate}.
Let $\mathsf{P}_k \equiv \mathsf{P}_{k,d}^{(c)}$ be the \textit{monic} orthogonal polynomial with respect to $e^{-n V_{d,t}^{(c)}}$, i.e. 
\begin{align} \label{def of orthogonality d gen}
\int_\C \mathsf{P}_j(z) \overline{ \mathsf{P}_k(z) } |z|^{2c} e^{ -n |z^d-t|^2 + nt^2 } \,\frac{d^2z}{\pi} = \mathsf{h}_{j,d}^{(c)}(t)\,\delta_{j,k}, 
\end{align} 
where $\mathsf{h}_{j,d}^{(c)}(t)$ is the squared norm. 
It is also convenient to consider the monic planar orthogonal polynomial $P_j \equiv P_j^{(c)}(z;a)$ satisfying the orthogonality  
\begin{equation} \label{def of orthogonality d 1}
\int_\C P_j(z) \overline{ P_k(z) } |z-a|^{2c}\,e^{-N|z|^2}\,\frac{d^2z}{\pi} = h_{j}^{ (c) }(a) \, \delta_{j,k}. 
\end{equation}

The following lemma establishes a connection between the partition function $Z_n(V_{d,t}^{(c)})$ and the moments of the characteristic polynomials of the Ginibre matrix. It provides an extension of \cite[Lemma 3.6]{DS22}, which treats the special case $c = 0$ and $n = dN$. The underlying idea—based on a multi-fold transform—can be traced back to earlier works \cite{BGM17,BEG18}.

\begin{lem} \label{Lem_multifold}
For $d \in \mathbb{N}$, let $m \in \{0,1,\dots,d-1\}$. Define 
\begin{align}
\mathcal{A}_1 & := \log \Big( e^{(dN+m)^2 t^2}\, c_{N,d}(m) \, (Z_{N}^{\mathrm{Gin}})^d  \Big),  \label{def of mathcal A1}
\\
\mathcal{A}_2 &:= \sum_{\ell=0}^{d-1} \log \mathbb{E}\Big( |\det(G_{N}-a)|^{\gamma_{\ell,c}} \Big),   \label{def of mathcal A2}
\\
\mathcal{A}_3 &:=\sum_{ \ell=0 }^{m-1} \log h_{ N }^{ ( \gamma_{\ell/2} ) }(a),  \label{def of mathcal A3}
\end{align}
where 
\begin{equation} \label{def of cNd m}
c_{N,d}(m) = \frac{ (dN+m)! }{  (N!)^{ d} }  \frac{ 1 }{d^{ dN+m }} \Big( \frac{N}{dN+m}\Big)^{  \frac{d}{2}N^2+ \frac{ 1+2c+2m }{ 2 } N+ \frac{ m(1+2c+m) }{ 2d }  },
\end{equation}
$Z_{N}^{\mathrm{Gin}}$ is given by \eqref{def of ZN Ginibre}, $G_N$ is the $N \times N$ complex Ginibre matrix and 
\begin{equation} \label{def of a and gamma lc}
a= \sqrt{ \frac{Nd+m}{N} }\,t , \qquad \gamma_{ \ell,c }=   -2 \Big( 1- \frac{ \ell+1+c }{d} \Big) . 
\end{equation} 
Then with $n=dN+m$, we have 
\begin{equation} \label{decomposition of log Zn}
\log Z_{n}(V_{d,t}^{(c)}) = \mathcal{A}_1+   \mathcal{A}_2+  \mathcal{A}_3. 
\end{equation} 
\end{lem}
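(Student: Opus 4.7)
The strategy is to exploit the $d$-fold rotational symmetry in order to reduce $Z_n(V_{d,t}^{(c)})$ to a product of one-fold planar integrals, which are then recognised as moments of characteristic polynomials of the Ginibre ensemble.

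\emph{Step 1: Block-diagonalisation via symmetry.} Since $V_{d,t}(e^{2\pi i/d}z)=V_{d,t}(z)$, angular integration forces $\int_{\mathbb C} z^j\bar z^k\,|z|^{2c}e^{-nV_{d,t}(z)}\,d^2z/\pi=0$ whenever $j\not\equiv k\pmod d$. Gram--Schmidt then yields, for each $\ell\in\{0,\dots,d-1\}$, monic orthogonal polynomials of the form $\mathsf{P}_{dN'+\ell}(z)=z^\ell\,q_{N',\ell}(z^d)$ with $q_{N',\ell}$ monic of degree $N'$.

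\emph{Step 2: Reduction to a one-fold integral.} Changing variables $w=z^d$ in the squared norm $\mathsf{h}_{dN'+\ell}^{(c)}$, the $d$-to-one covering produces a Jacobian factor $|w|^{2/d-2}/d$ that combines with $|z|^{2\ell+2c}$ to give exactly $|w|^{\gamma_{\ell,c}}$, so
\begin{equation*}
\mathsf{h}_{dN'+\ell}^{(c)}=\frac{e^{nt^2}}{d}\int_{\mathbb C}|q_{N',\ell}(w)|^2\,|w|^{\gamma_{\ell,c}}\,e^{-n|w-t|^2}\,\frac{d^2w}{\pi}.
\end{equation*}
The affine rescaling $w=\beta z+t$ with $\beta=\sqrt{N/n}$ turns $e^{-n|w-t|^2}$ into $e^{-N|z|^2}$; writing $q_{N',\ell}(\beta z+t)=\beta^{N'}\tilde q_{N',\ell}(z)$ makes $\tilde q_{N',\ell}$ monic of degree $N'$ in $z$, and rotational invariance of the Gaussian allows the insertion to be placed at $a=t\sqrt{n/N}$. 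Matching with \eqref{def of orthogonality d 1} identifies $\tilde q_{N',\ell}$ with the monic orthogonal polynomial of parameter $\gamma_{\ell,c}/2$ and shift $a$, yielding
\begin{equation*}
\mathsf{h}_{dN'+\ell}^{(c)}=\frac{e^{nt^2}}{d}\,\beta^{2N'+\gamma_{\ell,c}+2}\,h_{N'}^{(\gamma_{\ell,c}/2)}(a).
\end{equation*}

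\emph{Step 3: Assembly.} Using $Z_n(V_{d,t}^{(c)})=n!\prod_{j=0}^{n-1}\mathsf{h}_j^{(c)}$ and splitting $\{0,\dots,n-1\}$ according to residues modulo $d$, the product factorises into a bulk part over $\ell\in\{0,\dots,d-1\}$ and $N'\in\{0,\dots,N-1\}$ and a top part at $N'=N$, $\ell\in\{0,\dots,m-1\}$. For each $\ell$, the bulk product satisfies
\begin{equation*}
\prod_{N'=0}^{N-1}h_{N'}^{(\gamma_{\ell,c}/2)}(a)=\frac{Z_N^{\mathrm{Gin}}}{N!}\,\mathbb E\!\left[|\det(G_N-a)|^{\gamma_{\ell,c}}\right],
\end{equation*}
by applying the determinantal identity \eqref{Zn in terms of det} to the Ginibre weight with a point charge at $a$. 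Summing in $\ell$ produces $\mathcal A_2$, while the top factor is $\mathcal A_3$ by definition; all remaining scalars collect into $\mathcal A_1$.

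The main (and only) bookkeeping hurdle is the cumulative exponent of $\beta$. Using $\sum_{\ell=0}^{d-1}\gamma_{\ell,c}=1+2c-d$ and $\sum_{\ell=0}^{m-1}\gamma_{\ell,c}=\tfrac{m(m+1+2c)}{d}-2m$, a direct computation shows that the total exponent equals $dN^2+(1+2c+2m)N+\tfrac{m(m+1+2c)}{d}$; combined with $\log\beta=\tfrac12\log(N/n)$, this reproduces precisely the power of $N/(dN+m)$ appearing in \eqref{def of cNd m}. Collecting all remaining prefactors---$n!$, $(N!)^{-d}$, $e^{n^2t^2}/d^n$, $(Z_N^{\mathrm{Gin}})^d$, and the accumulated power of $\beta$---then produces $\mathcal A_1$ in closed form, and $\log Z_n(V_{d,t}^{(c)})=\mathcal A_1+\mathcal A_2+\mathcal A_3$ follows by inspection.
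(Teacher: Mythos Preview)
Your proof is correct and follows essentially the same route as the paper: exploit the $d$-fold symmetry to write $\mathsf{P}_{dN'+\ell}(z)=z^\ell q_{N',\ell}(z^d)$, change variables $w=z^d$ to reduce each squared norm to a Ginibre-type integral with insertion exponent $\gamma_{\ell,c}$, rescale to match the $N$-point Ginibre weight, and then split the product of norms by residue class modulo $d$ to recognise the characteristic-polynomial moments and the leftover top norms. The only cosmetic difference is that the paper translates first ($w\mapsto t-w$) and then rescales, obtaining the insertion at $a$ directly, whereas you perform the affine map $w=\beta z+t$ and then invoke rotational invariance to move the insertion from $-a$ to $a$; both are equivalent since $h_{N'}^{(\gamma_{\ell,c}/2)}$ depends only on $|a|$.
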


\begin{proof}
Recall that $\mathsf{h}_{j,d}^{(c)}(t)$ and $h_{j}^{ (c) }(a)$ are given by \eqref{def of orthogonality d gen} and \eqref{def of orthogonality d 1}.
We claim that for $\ell=0,1,\dots,d-1$, 
\begin{equation} \label{relation of OP norms}
\mathsf{h}_{dj+\ell,d}^{(c)}(t)= \frac{ e^{(dN+m) t^2} }{d} \Big( \frac{N}{dN+m}\Big)^{ \frac{c+\ell+1}{d} +j  }  h_{j}^{ (\gamma_{ \ell,c }/2) }(a).
\end{equation}
We follow the idea in \cite[Section 3]{BEG18}. Due to the discrete rotational symmetry, we have 
$$
\mathsf{P}_{dj+\ell}(z)=z^\ell\, p_{j}(z^d)  
$$
for some monic polynomial $p_j$. 
Then by the change of variables, 
\begin{align*}
\int_\C \mathsf{P}_{dj+\ell}(z) \overline{ \mathsf{P}_{dk+\ell}(z) } |z|^{2c} e^{ -n |z^d-t|^2 + nt^2 } \,\frac{d^2z}{\pi}  &= d \int_{ 0<\arg z <  \frac{2\pi}{d} } p_j(z^d)\overline{ p_k(z^d) } |z|^{2c+2\ell} e^{ -n |z^d-t|^2 + nt^2 } \,\frac{d^2z}{\pi}
\\
&= \frac{ e^{n t^2} }{d} \int_\C p_j(w) \overline{p_k(w)} |w|^{ \gamma_{ \ell,c } } e^{ -n|w-t|^2 } \frac{d^2w}{\pi}. 
\end{align*}
Note here that with $n=dN+m$, 
\begin{align*}
&\quad \int_\C p_j(w) \overline{p_k(w)} |w|^{ \gamma_{ \ell,c } } e^{ -n|w-t|^2 } \frac{d^2w}{\pi}  = \int_\C p_j(t-w) \overline{p_k(t-w)} |w-t|^{ \gamma_{ \ell,c } } e^{ -n|w|^2 } \frac{d^2w}{\pi}
\\
& =\Big( \frac{N}{dN+m}\Big)^{ \frac{c+\ell+1}{d}  } \int_\C  p_j\Big(t- \sqrt{ \tfrac{N}{dN+m} }  w \Big) \overline{p_k\Big(t- \sqrt{ \tfrac{N}{dN+m} }  w \Big)} \, |w-a|^{ \gamma_{ \ell,c } } e^{ -N |w|^2 } \frac{ d^2 w }{\pi}. 
\end{align*}
Therefore, it follows from \eqref{def of orthogonality d gen} and \eqref{def of orthogonality d 1} that 
\begin{equation}
p_j\Big(t- \sqrt{ \tfrac{N}{dN+m} }  z \Big) = \Big( -\tfrac{N}{dN+m}\Big)^{j/2} P_j^{ (\gamma_{\ell,c}/2) }(z).
\end{equation}
By using this, we obtain 
\begin{align}
\begin{split}
&\quad \int_\C |\mathsf{P}_{dj+\ell}(z)|^2\, |z|^{2c} e^{ -n |z^d-t|^d + nt^d } \,\frac{d^2z}{\pi}  
\\
&=  \frac{ e^{(dN+m) t^2} }{d} \Big( \frac{N}{dN+m}\Big)^{ \frac{c+\ell+1}{d} +j  }  \int_\C  | P_j^{ (\gamma_{\ell,c}/2) }(w)  |^2 |w-a|^{ \gamma_{ \ell,c } } e^{ -N |w|^2 } \frac{ d^2 w }{\pi},
\end{split}
\end{align}  
which leads to \eqref{relation of OP norms}. 

Note that by \eqref{Zn in terms of det} and \eqref{def of orthogonality d gen}, we have 
\begin{equation}
Z_{dN+m}(V_{d,t}^{(c)}) = (dN+m)! \prod_{j=0}^{dN+m-1}  \mathsf{h}_{j,d}^{(c)}(t) .
\end{equation}
On the other hand, by letting 
$$Q(z):=|z|^2-\frac{\gamma_{\ell,c}}{n} \log|z-a|,$$ 
we have   
\begin{equation} 
\mathbb{E}\Big( |\det(G_{N}-a)|^{\gamma_{\ell,c}} \Big)  = \frac{ Z_N( Q )  }{ Z_N^{ \rm Gin } } = \frac{N!}{ Z_N^{ \rm Gin } } \prod_{j=0}^{N-1} h_{ N }^{ ( \gamma_{\ell,c}/2 ) }(a). 
\end{equation}
Combining these with \eqref{relation of OP norms}, we obtain
\begin{align*}
Z_{dN+m}(V_{d,t}^{(c)}) 
&= \frac{(dN+m)!}{ (N!)^d } \bigg( \prod_{\ell=0}^{d-1}  N! \prod_{j=0}^{N-1}  \mathsf{h}_{dj+\ell,d}^{(c)} (t)  \bigg) \prod_{j=dN}^{dN+m-1}  \mathsf{h}_{j,d}^{(c)} (t) 
\\
&= e^{(dN+m)^2 t^2}\, c_{N,d}(m) \bigg( \prod_{\ell=0}^{d-1}  N! \prod_{j=0}^{N-1}  h_{j}^{( \gamma_{\ell,c}/2 )} (a) \bigg) \prod_{ \ell=0 }^{m-1} h_{ N }^{ ( \gamma_{\ell,c}/2 ) }(a)
\\
&= e^{(dN+m)^2 t^2}\, c_{N,d}(m) \, (Z_{N}^{\mathrm{Gin}})^d  \prod_{\ell=0}^{d-1}\mathbb{E}\Big( |\det(G_{N}-a)|^{\gamma_{\ell,c}} \Big) \prod_{ \ell=0 }^{m-1} h_{ N }^{ ( \gamma_{\ell/2} ) }(a), 
\end{align*}
which completes the proof.  
\end{proof}

We compute the asymptotic behaviour of $\mathcal{A}_1$.

\begin{lem} \label{Lem_asymp pre factors}
For $d \in \mathbb{N}$, let $m \in \{0,1,\dots,d-1\}$. Let $n=dN+m$ and $\mathcal{A}_1$ be given by \eqref{def of mathcal A1}. Then as $n \to \infty$, we have 
\begin{align}
\begin{split} \label{final sum 1}
\mathcal{A}_1&= \Big( t^2 -\frac{3}{4d}-\frac{\log d}{2 d} \Big) n^2+ \frac12 n\log n
 +  \Big( \frac{\log(2\pi)}{2} -1 -\frac{  1+2c+d }{2d} \log d  \Big) n + \frac{6-d}{12} \log n
\\
&\quad    + d \,\zeta'(-1) + \frac{d}{12} \log d + \frac{\log(2\pi)}{2}  +m \Big( N + \frac{ \log N }{ 2 }-\frac{\log(2\pi)}{2} +\frac{d-2c-1}{2d} \Big)  +O(\frac{1}{n}).
\end{split}
\end{align}  
\end{lem}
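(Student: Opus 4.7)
The plan is to substitute $n = dN+m$ and expand each factor in $\mathcal{A}_1$ using Stirling's formula \eqref{gamma asymp} together with the Barnes $G$-function asymptotics \eqref{Barnes G asymp}, then combine and regroup the contributions by order. Using \eqref{def of ZN Ginibre} and \eqref{def of cNd m}, I write
$$
\mathcal{A}_1 = n^2 t^2 + \log (dN+m)! - d \log N! - n \log d + E_{N,m} \log\frac{N}{dN+m} + d \log G(N+2) - \frac{dN(N+1)}{2}\log N,
$$
where $E_{N,m} := \tfrac{d}{2}N^2 + \tfrac{1+2c+2m}{2}N + \tfrac{m(1+2c+m)}{2d}$ is the exponent in $c_{N,d}(m)$. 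Each piece now admits a classical full asymptotic expansion in $N$.

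First I would apply Stirling to $\log(dN+m)!$ and $d\log N!$, and apply \eqref{Barnes G asymp} to $d\log G(N+2)$ to extract the $d\zeta'(-1)$, $\tfrac{\log(2\pi)}{2}$, and $-\tfrac{1}{12}\log N$ constants (with multiplicity $d$). This already produces the terms $\tfrac12 n \log n$, part of the $n$ coefficient, and most of the $\log n$ and constant terms in \eqref{final sum 1}. The $n^2$ terms coming from Stirling and Barnes combine with $-n\log d$ and with the portion of the power factor below to yield the $-\tfrac{3}{4d}-\tfrac{\log d}{2d}$ inside $C_1$; the $t^2 n^2$ contribution comes directly from the first summand.

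The delicate step is the expansion of $E_{N,m}\log\tfrac{N}{dN+m}$. I would write $\log\tfrac{N}{dN+m} = -\log d - \log\!\bigl(1+\tfrac{m}{dN}\bigr)$ and then use $\log(1+x) = x - \tfrac{x^2}{2} + \tfrac{x^3}{3}-\cdots$ to produce an expansion of the form
$$
E_{N,m}\log\frac{N}{dN+m} = -E_{N,m}\log d - \frac{m}{2}N - \frac{m(1+2c+m)}{2d} + \frac{m}{2d} - \frac{m^2}{4d} + O(N^{-1}).
$$
Here one needs the expansion up to and including the constant (in $N$) order, since $E_{N,m}$ is of size $N^2$ and multiplies a series starting at $1/N$. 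The $-E_{N,m}\log d$ piece produces the $-\tfrac{\log d}{2d}n^2$ and $-\tfrac{(1+2c+d)}{2d}(\log d)\,n$ contributions (after converting $N = (n-m)/d$), together with a $-\tfrac{m(1+2c+m)}{2d}\log d$ constant piece that eventually cancels with the analogous pieces from Stirling when one re-expresses $N$ in terms of $n$.

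The remaining work is bookkeeping: convert every $N$ to $n$ via $N = (n-m)/d$ in the polynomial parts, gather all $m$-independent terms to match $C_1 n^2 + \tfrac12 n\log n + (\ldots)n + \tfrac{6-d}{12}\log n + d\zeta'(-1) + \tfrac{d}{12}\log d + \tfrac{\log(2\pi)}{2}$, and verify that the residual $m$-dependence collapses into the single term $m\bigl(N + \tfrac12\log N - \tfrac{\log(2\pi)}{2} + \tfrac{d-2c-1}{2d}\bigr)$ claimed in \eqref{final sum 1}. The main obstacle is precisely this bookkeeping in the $m$-dependent pieces: Stirling applied to $\log(dN+m)!$, Barnes applied to $d\log G(N+2)$, and the expansion of the power factor each contribute $m$-dependent constants, and one must verify that their algebraic sum matches the expression quoted in the lemma without needing any deeper identity than the elementary $\log(1+x)$ expansion.
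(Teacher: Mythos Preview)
Your approach is exactly the paper's: the authors simply invoke the Stirling expansion \eqref{gamma asymp} and the Barnes $G$-function expansion \eqref{Barnes G asymp} and then say ``by straightforward computations \ldots\ the lemma follows,'' so your outline is a more detailed version of the same argument. One small caution: in your displayed expansion of $E_{N,m}\log\tfrac{N}{dN+m}$ the constant term appears to be off by $\tfrac{m}{2d}$ (the $O(1)$ contribution from the Taylor series should be $\tfrac{m^2}{4d}-\tfrac{m(1+2c+2m)}{2d}$), so double-check that particular line when you carry out the bookkeeping.
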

\begin{proof}
Recall the asymptotic behaviours of the gamma and Barnes $G$-function \cite[Eqs.~(5.11.1), (5.17.5)]{NIST}: as $ z \to +\infty,$   
\begin{align} 
\label{gamma asymp}
 \log \Gamma(z)&= \Big(z-\frac12\Big) \log z - z +\frac12 \log(2\pi) +\sum_{k=1}^\infty \frac{B_{2k}}{ 2k(2k-1)z^{2k-1} }, \\
\begin{split} \label{Barnes G asymp}
\log G(z+1) & =\frac{z^2 \log z}{2} -\frac34 z^2+\frac{ \log(2\pi) z}{2}-\frac{\log z}{12}+\zeta'(-1)  + \sum_{k=1}^\infty \frac{ B_{2k+2} }{ 4k(k+1)  } \frac{1}{z^{2k}},
\end{split}
\end{align}
where $B_k$ is the Bernoulli number \eqref{def of Bernoulli number}. 
Then by straightforward computations using \eqref{gamma asymp} and \eqref{Barnes G asymp}, the lemma follows.  
\end{proof}

Next, we prove Theorem~\ref{Prop_moment GinUE}. 

\begin{proof}[Proof of Theorem~\ref{Prop_moment GinUE}]
Due to the rotational invariance, it suffices to consider the case $a \ge 0$. 
One of the key ingredients is the following differential identity \cite[Proposition 3.9]{BSY24}, derived in the context of \(\tau\)-function theory:
\begin{equation} \label{eqn for tau derivative}
\frac{d}{da} \log \mathbb{E}\Big( |\det(G_{N}-a)|^{2c} \Big) = 2N \mathfrak{A},
\end{equation}
where $\mathfrak{A}$ is defined by 
\begin{equation} \label{def of mathfrak A}
  \mathfrak{A}= \lim_{z \to \infty} \frac{P_N^{(c)}(z) - z^N}{z^{N-1}}. 
\end{equation}
The remaining step is then to compute \( \mathfrak{A} \) in each regime.

We first show \eqref{asymp of moments of GinUE}. Suppose that $a>1$. 
By \cite[Proposition 2.3]{BY23}, for $a>1$, $c>-1$ and for $z$ outside  the unit disc\footnote{Indeed, the asymptotic formula is valid for \( z \) lying outside the Szeg\H{o} curve, which itself is contained within the unit disc. For our purposes, we only require the behaviour as \( z \to \infty \), and therefore omit the precise definition of the Szeg\H{o} curve.}, we have 
\begin{equation}
P_N^{ (c) }(z) = z^{N}\Big(\frac{z}{z-1/a}\Big)^{c} \bigg[ 1- \frac{c}{1-a z}\Big( \frac{1+c}{2}\frac{1}{1-a z }+\frac{c}{ 1-a^2 } \Big) \frac{1}{N} +O( \frac{1}{N^2} )  \bigg],
\end{equation}
as $N \to \infty$. This in turns implies that 
\begin{equation}
P_N^{ (c) }(z)= z^N+ \frac{c}{a}\Big( 1 +\frac{c}{ 1-a^2 }\frac{1}{N} \Big) z^{N-1}+O(z^{N-2}), \qquad \textup{as }z \to \infty. 
\end{equation}
Therefore by \eqref{eqn for tau derivative}, we obtain 
\begin{equation} \label{derivative of log characteristic}
\frac{d}{da} \log \mathbb{E}\Big( |\det(G_{N}-a)|^{2c} \Big) =  \frac{2c}{a} N+\frac{2c^2}{a(1-a^2)} +O(\frac{1}{N}) .
\end{equation}
Note that since 
\begin{align*}
\mathbb{E}\Big( |\det(G_{N}-a)|^{2c} \Big)& = \frac{1}{Z_N^{ \rm Gin }} \int_{ \C^N } \prod_{1 \le j<k \le N} |z_{j}-z_{k}|^{2} \prod_{j=1}^{N}|z_j-a|^{2c} e^{-N|z_j|^2}  \frac{d^{2}z_{j}}{\pi}
\\
&  =  \frac{ a^{2cN} }{Z_N^{ \rm Gin }} \int_{ \C^N } \prod_{1 \le j<k \le N} |z_{j}-z_{k}|^{2} \prod_{j=1}^{N}|1-z_j/a|^{2c} e^{-N|z_j|^2}  \frac{d^{2}z_{j}}{\pi},
\end{align*}
we have 
\begin{equation}
\lim_{a \to \infty} \bigg( \log \mathbb{E}\Big( |\det(G_{N}-a)|^{2c} \Big) - (2c \log a )N \bigg)=0. 
\end{equation}
(See also \cite[Lemma 3.1]{BSY24}, which states essentially the same formula up to normalisation.)
Therefore by integrating \eqref{derivative of log characteristic}, we obtain \eqref{asymp of moments of GinUE}. 

Next, we show \eqref{asymp of moments of GinUE bulk}. Suppose that $|a|<1$. By \cite[Theorem 3]{LY17}, for $|a|<1$, $c>-1$ and for $z$ outside  the unit disc, we have 
\begin{equation}
P_N^{(c)}(z) = z^N \Big( \frac{z}{z-a}\Big)^c \Big( 1+O(\frac{1}{N^\infty})\Big),
\end{equation}
as $N \to \infty$. Therefore, we have 
\begin{equation}
P_N^{ (c) }(z)= z^N+ \Big( ac +O(\frac{1}{N^\infty}) \Big) z^{N-1}+O(z^{N-2}), \qquad \textup{as }z \to \infty. 
\end{equation}
Then it follows that 
\begin{equation}
\frac{d}{da} \log \mathbb{E}\Big( |\det(G_{N}-a)|^{2c} \Big) = 2 ac N+O(\frac{1}{N^\infty}). 
\end{equation}
On the other hand, for the case $a=0$, by \eqref{Zn t0 Barnes G} with $d=1$ and \eqref{def of ZN Ginibre},
\begin{equation}
\begin{split}
\mathbb{E}\Big( |\det(G_{N})|^{2c} \Big) &= \frac{1}{Z_N^{ \rm Gin }}   \frac{N!}{ N^{ \frac{N(N+2c+1)}{2} }  }   \frac{ G(N+c+1)  }{ G(   c+1  ) }  
=   \frac{1}{ N^{ cN  }  }   \frac{ G(N+c+1)  }{ G(N+1) G(   c+1  ) }  . 
\end{split}
\end{equation}
Combining these with \eqref{Barnes G asymp}, we obtain \eqref{asymp of moments of GinUE bulk}. In particular, we have proven \eqref{asymp of moments of GinUE bulk Barnes G}. Furthermore, by \eqref{Barnes G asymp}, the error term $O(N^{-1})$ is given by  
$$
\sum_{k=1}^\infty(-1)^{k+1}\Big( \frac{ \gamma^2 }{4k(k+1)(k+2)}-\frac{1}{12k} \Big) \Big( \frac{\gamma}{2}\Big)^k \frac{1}{N^k} + \sum_{k=1}^\infty \frac{ B_{2k+2} }{ 4k(k+1) } \bigg( \frac{1}{(N+\frac{\gamma}{2})^{2k} }-\frac{1}{N^{2k}}\bigg). 
$$
Then by straightforward computations, we obtain \eqref{def of mathcal Cm}. 
\end{proof}

As an immediate consequence of Theorem~\ref{Prop_moment GinUE}, we obtain asymptotic behaviour of $\mathcal{A}_2$ in \eqref{def of mathcal A2}. 

\begin{lem} \label{Lem_asymp of mathcal A2}
For $d \in \mathbb{N}$, let $m \in \{0,1,\dots,d-1\}$. Let $n=dN+m$ and $\mathcal{A}_2$ be given by \eqref{def of mathcal A2}. Then as $n \to \infty$, we have the following.
\begin{itemize}
    \item[\textup{(i)}] Let $t>t_c$. Then we have 
    \begin{align}
\begin{split} \label{final sum 2}
\mathcal{A}_2   &= \frac{ (1+2c-d) \log( \sqrt{d} t  ) }{ d } n + \Big( \frac{c(d-c-1)}{d}-\frac{(d-1)(2d-1)}{6d} \Big) \log\Big( \frac{dt^2-1}{dt^2} \Big)
\\
&\quad + m  \frac{1+2c-d}{2d} \Big( 1-2 \log(\sqrt{d}t)  \Big)   +O( \frac{1}{n} ) .  
\end{split}
\end{align}
\item[\textup{(ii)}] Let $t<t_c$. Then  we have
\begin{align}
\begin{split} \label{final sum 2 bulk}
\mathcal{A}_2   &= \frac{1+2c-d}{2d}(dt^2-1) n +  \Big( \frac{(d-1)(2d-1)}{12d}-\frac{c(d-c-1)}{2d} \Big) \log \Big(\frac{n}{d}\Big)
\\
&\quad + \frac{ 1+2c-d }{4 }\log(2\pi) -\sum_{\ell=0}^{d-1} \log G\Big(  \frac{\ell+1+c}{d} \Big) + m\frac{ 1+2c-d }{2d}  +O(\frac{1}{n}). 
\end{split}
\end{align} 
\end{itemize}
\end{lem}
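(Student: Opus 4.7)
The plan is to derive both formulas by direct substitution of the two parts of Theorem~\ref{Prop_moment GinUE} into the definition~\eqref{def of mathcal A2}, and then to sum over $\ell \in \{0, 1, \dots, d-1\}$. Since $a = \sqrt{(dN+m)/N}\,t$ satisfies $a^{2} = dt^{2} + mt^{2}/N$ and hence converges to $\sqrt{d}\,t$ as $N \to \infty$, in the regime $t > t_{c}$ we have $a \ge \sqrt{d}\,t > 1$ for every $N \ge 1$, while for $t < t_{c}$ we have $a < 1$ once $N$ is large enough. In each case $a$ stays in a compact subset of $(1, \infty)$, respectively $(0, 1)$, so the $O(1/N)$ error terms in \eqref{asymp of moments of GinUE} and \eqref{asymp of moments of GinUE bulk} are uniform and can be absorbed into the stated $O(1/n)$ remainder.

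The combinatorial input is the evaluation of the sums
\[
\sum_{\ell=0}^{d-1} \gamma_{\ell,c} = 2c + 1 - d, \qquad \sum_{\ell=0}^{d-1} \frac{\gamma_{\ell,c}^{2}}{4} = \frac{(d-1)(2d-1)}{6d} - \frac{c(d-c-1)}{d},
\]
which follow from elementary identities for arithmetic progressions applied to $\gamma_{\ell,c} = -2 + 2(\ell+1+c)/d$. These two identities directly produce the coefficients of $\log(\sqrt{d}t)$ and $\log((dt^{2}-1)/(dt^{2}))$ in \eqref{final sum 2} (case $t > t_{c}$), and the coefficients of $(dt^{2}-1)n$, $\log(n/d)$ and $\log(2\pi)$ in \eqref{final sum 2 bulk} (case $t < t_{c}$). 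For the Barnes $G$ contribution in the latter case, the crucial observation is the identity $1 + \gamma_{\ell,c}/2 = (\ell+1+c)/d$, which rewrites $-\sum_{\ell} \log G(1 + \gamma_{\ell,c}/2)$ as $-\sum_{\ell} \log G((\ell+1+c)/d)$, as required.

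The remaining step is to extract the $m$-dependent contributions, which are the subtlest pieces of the formulas. They come from two sources: the expansion $a^{2} = dt^{2} + mt^{2}/N$, which affects $\log|a|$ in case (i) and $(|a|^{2}-1)N$ in case (ii); and the substitution $n = dN + m$, which redistributes $N$-dependent terms into leading $n$-dependent and residual $m$-dependent pieces. In case (i), expanding $\log|a| = \log(\sqrt{d}\,t) + m/(2dN) + O(N^{-2})$ and pairing the resulting $\gamma_{\ell,c} m/(2d)$ contributions with the conversion $N = (n-m)/d$ applied to $\tfrac{(1+2c-d)\log(\sqrt{d}t)}{d}\,n$ yields precisely the oscillatory term $m\tfrac{1+2c-d}{2d}(1 - 2\log(\sqrt{d}t))$. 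In case (ii), the identity $(|a|^{2}-1)N = (dt^{2}-1)N + mt^{2}$ is exact, and straightforward algebra shows that combining $\tfrac{1+2c-d}{2}mt^{2}$ with the expansion of $\tfrac{1+2c-d}{2d}(dt^{2}-1)n$ leaves the residual $m(1+2c-d)/(2d)$. The replacement $\log N = \log(n/d) + O(1/N)$ contributes only to the $O(1/n)$ remainder because the coefficient of $\log N$ is bounded.

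The main obstacle is therefore purely arithmetic rather than analytic: the asymptotic analysis is a clean application of Theorem~\ref{Prop_moment GinUE} uniformly on compact sets, but one must carefully track how the $m$-dependent corrections from the two Taylor-type expansions above combine to yield the compact oscillatory formulas stated in \eqref{final sum 2} and \eqref{final sum 2 bulk}.
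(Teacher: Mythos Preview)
Your proposal is correct and follows exactly the approach taken in the paper, which simply states that the lemma follows from straightforward computations using Theorem~\ref{Prop_moment GinUE} together with the definitions \eqref{def of mathcal A2} and \eqref{def of a and gamma lc}. In fact, you supply more detail than the paper does: the explicit evaluations of $\sum_\ell \gamma_{\ell,c}$ and $\sum_\ell \gamma_{\ell,c}^2/4$, the identification $1+\gamma_{\ell,c}/2 = (\ell+1+c)/d$, the uniformity of the $O(1/N)$ error on compact subsets of $\{|a|>1\}$ or $\{|a|<1\}$, and the careful tracking of the $m$-dependent pieces via the expansions $a^2 = dt^2 + mt^2/N$ and $N = (n-m)/d$ are precisely the computations that the paper leaves to the reader.
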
 
\begin{proof}
The lemma follows from straightforward computations using Theorem~\ref{Prop_moment GinUE}, \eqref{def of mathcal A2} and \eqref{def of a and gamma lc}. 
\end{proof}

Finally, we compute asymptotic behaviour of $\mathcal{A}_3$ in \eqref{def of mathcal A3}.

\begin{lem} \label{Lem_asymp of mathcal A3}
For $d \in \mathbb{N}$, let $m \in \{0,1,\dots,d-1\}$. Let $n=dN+m$ and $\mathcal{A}_3$ be given by \eqref{def of mathcal A3}. Then as $n \to \infty$, we have the following.
\begin{itemize}
    \item[\textup{(i)}] Let $t>t_c$. Then we have 
   \begin{align} \label{final sum 3}
\mathcal{A}_3 &= -m \Big( N + \frac{ \log N }{ 2 }-\frac{\log(2\pi)}{2} \Big) +\frac{ m(m+1+2c-2d) }{d} \log( \sqrt{d}t )+O(\frac{1}{n}).
\end{align}
\item[\textup{(ii)}] Let $t<t_c$. Then  we have
\begin{equation} \label{final sum 3 bulk}
\mathcal{A}_3 = -m \Big( N + \frac{ \log N }{ 2 }-\frac{\log(2\pi)}{2} \Big) +O(\frac{1}{n}).
\end{equation}
\end{itemize}
\end{lem}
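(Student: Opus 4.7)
The plan is to apply sharp large-$N$ asymptotics of the squared norms $h_N^{(c)}(a)$ defined in \eqref{def of orthogonality d 1}, obtained via the Riemann--Hilbert analyses of \cite{LY17,BY23} already invoked in the proof of Theorem~\ref{Prop_moment GinUE}, and then sum the resulting expressions over the finitely many values $\ell=0,\dots,m-1$. As a preliminary observation, since
$$
a=\sqrt{\tfrac{dN+m}{N}}\,t=\sqrt{d}\,t+O(\tfrac{1}{N}),
$$
one has $|a|>1$ (resp.\ $|a|<1$) precisely when $t>t_c$ (resp.\ $t<t_c$), placing the singularity in the exterior (resp.\ bulk) of the unit disc droplet of the Ginibre reference weight $e^{-N|z|^2}$.

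The main input to be extracted is the asymptotic
$$
\log h_N^{(c)}(a)=-N-\tfrac{\log N}{2}+\tfrac{\log(2\pi)}{2}+\Delta(a,c)+O(\tfrac{1}{N}),
$$
with $\Delta(a,c)=0$ when $|a|<1$ and $\Delta(a,c)=2c\log|a|$ when $|a|>1$. In the bulk regime, this follows from \cite[Theorem~3]{LY17}: the strong asymptotic $P_N^{(c)}(z)=z^N\bigl(\tfrac{z}{z-a}\bigr)^{c}(1+O(1/N^{\infty}))$, combined with the algebraic cancellation $|z/(z-a)|^{2c}\cdot|z-a|^{2c}=|z|^{2c}$ on the integration region where the Gaussian weight concentrates, shows that $h_N^{(c)}(a)$ agrees with the exact $a=0$ value $h_N^{(c)}(0)=\Gamma(N+c+1)/N^{N+c+1}$ up to $O(1/N^{\infty})$; Stirling then collapses the remaining $c$-dependence into $O(1/N)$ via $\log\Gamma(N+c+1)-(N+c+1)\log N=-N-\tfrac{\log N}{2}+\tfrac{\log(2\pi)}{2}+O(1/N)$. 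In the exterior regime, the refined asymptotic of \cite[Proposition~2.3]{BY23} introduces the additional $2c\log|a|$ correction, consistent with both the $|a|\to\infty$ heuristic $h_N^{(c)}(a)\sim|a|^{2c}\,\Gamma(N+1)/N^{N+1}$ and the differential identity \eqref{eqn for tau derivative} integrated in $a$.

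Substituting $c\mapsto\gamma_{\ell,c}/2=(\ell+1+c)/d-1$ and summing over $\ell=0,\dots,m-1$, the $c$-independent part contributes the common prefactor $-m\bigl(N+\tfrac{\log N}{2}-\tfrac{\log(2\pi)}{2}\bigr)$ appearing in both \eqref{final sum 3} and \eqref{final sum 3 bulk}. In the exterior regime, the additional contribution is
$$
\log|a|\sum_{\ell=0}^{m-1}\gamma_{\ell,c}=-2\log|a|\sum_{\ell=0}^{m-1}\Bigl(1-\tfrac{\ell+1+c}{d}\Bigr)=\frac{m(m+1+2c-2d)}{d}\log|a|,
$$
and using $\log|a|=\log(\sqrt{d}\,t)+O(1/N)$ produces the oscillatory piece in \eqref{final sum 3}; since $m<d$ is bounded, the total error remains $O(1/N)=O(1/n)$. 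The main obstacle is extracting the sharp $O(1/N)$ error bound on $\log h_N^{(c)}(a)$ in the exterior case, which ultimately relies on the Riemann--Hilbert analysis of \cite{BY23}; the bulk case is comparatively simpler because the stronger $O(1/N^{\infty})$ error from \cite{LY17} absorbs all subleading $c$- and $a$-dependencies without further bookkeeping.
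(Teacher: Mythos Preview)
Your overall plan and the target norm asymptotics
\[
h_N^{(c)}(a)=e^{-N}\sqrt{\tfrac{2\pi}{N}}\times\begin{cases}|a|^{2c}\,(1+O(1/N)),&|a|>1,\\(1+O(1/N)),&|a|<1,\end{cases}
\]
are exactly what the paper uses, and the subsequent summation over $\ell=0,\dots,m-1$ is carried out correctly. For the exterior case, note that the paper simply quotes \eqref{asymptotic op norm d1 out} from \cite[Lemma~2.4]{BY23}, which already gives the \emph{norm} asymptotic directly; you instead invoke \cite[Proposition~2.3]{BY23}, which is the \emph{polynomial} asymptotic, and then have to pass to the norm by an integral argument---citing Lemma~2.4 would close this step without further work.

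The bulk case $|a|<1$ contains a genuine gap. The asymptotic $P_N^{(c)}(z)=z^N\bigl(\tfrac{z}{z-a}\bigr)^{c}\bigl(1+O(1/N^\infty)\bigr)$ from \cite[Theorem~3]{LY17} is valid only for $z$ \emph{outside} the Szeg\H{o} curve; plugging it into the planar norm integral and using the algebraic cancellation $|z/(z-a)|^{2c}|z-a|^{2c}=|z|^{2c}$ therefore accounts only for that exterior region. To conclude $h_N^{(c)}(a)=h_N^{(c)}(0)\bigl(1+O(1/N^\infty)\bigr)$ you must also bound the contribution from inside the Szeg\H{o} curve, where a different asymptotic for $P_N^{(c)}$ holds, and show it is negligible to the stated accuracy. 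This is plausible---the integrand concentrates near $|z|=1$---but it requires a Laplace-type localisation estimate that you do not supply. The paper bypasses this issue entirely: in Appendix~\ref{Appendix_norm} it uses the contour-orthogonality identity \eqref{eq hn} from \cite{BBLM15}, which expresses $h_N^{(c)}$ algebraically in terms of $\widetilde h_N^{(c)}=-2\pi i\lim_{z\to\infty}z^{N+1}[Y(z)]_{12}$ and the value $P_{N+1,N}^{(c)}(0)$. Both quantities are read off directly from the Riemann--Hilbert solution $Y(z)$ at $z=\infty$ and $z=0$, giving \eqref{asymptotic op norm d1 bulk} without any bulk-integral localisation.
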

\begin{proof}
Let $n=dN+m$.  By \cite[Lemma 2.4]{BY23}, for $|a|>1$ and $c>-1$, we have  
\begin{equation}  \label{asymptotic op norm d1 out}
h_N^{(c)} (a)=  e^{-N} \sqrt{ \frac{2\pi}{N} } |a|^{2c} \Big(1+O(\frac{1}{N})\Big),
\end{equation}
as $N \to \infty.$ Using this, \eqref{final sum 3} immediately follows. 
On the other hand, for $|a|<1$, we have
\begin{equation} \label{asymptotic op norm d1 bulk}
h_N^{(c)}(a) = e^{-N} \sqrt{ \frac{2\pi}{N} }  \Big( 1+O(\frac{1}{N}) \Big). 
\end{equation}
This asymptotic behaviour follows from the computations in \cite[Subsection 3.3]{LY23}, combined with the Riemann--Hilbert analysis in \cite{LY17}. For the reader’s convenience, we provide the details in Appendix~\ref{Appendix_norm}. Using \eqref{asymptotic op norm d1 bulk}, the desired asymptotic behaviour \eqref{final sum 3 bulk} follows. 
\end{proof}

\begin{rem}
In the special case \( c = 1 \), the orthogonal norm can be expressed in terms of the incomplete Gamma function \( Q(s, z) \) as  
$$
h_k^{(1)}	= \frac{(k+1)!}{N^{k+2}} \frac{Q(k+2,N a^2)}{ Q(k+1,Na^2) },
$$ 
see \cite[Appendix A]{BY23}. The asymptotic behaviours in \eqref{asymptotic op norm d1 out} and \eqref{asymptotic op norm d1 bulk} can then be rederived using the standard asymptotic expansions of the incomplete Gamma function \cite[Section 8.11]{NIST}. 
\end{rem}

We are now ready to prove Theorems~\ref{Thm_free energy expansion}. 

\begin{proof}[Proof of Theorems~\ref{Thm_free energy expansion}] 
Combining the decomposition \eqref{decomposition of log Zn} from Lemma~\ref{Lem_multifold} with the asymptotic behaviours of each term provided in Lemmas~\ref{Lem_asymp pre factors}, \ref{Lem_asymp of mathcal A2}, and \ref{Lem_asymp of mathcal A3}, respectively, and after straightforward simplifications, the theorems follow.
We note that each of the expressions in Lemmas~\ref{Lem_asymp pre factors}, \ref{Lem_asymp of mathcal A2}, and \ref{Lem_asymp of mathcal A3} contains oscillatory terms depending on \( m \).  
However, the oscillatory terms of orders \( O(N) \) and \( O(\log N) \) cancel out upon summation.  
Furthermore, in the case \( t < t_c \), the oscillatory term of order \( O(1) \) also cancels.  
On the other hand, when \( t > t_c \), an oscillatory contribution of order \( O(1) \) remains after the summation.  
See Table~\ref{Table_asymptotic summary} for a summary of these oscillatory contributions.
\end{proof}

Finally, we prove Corollary~\ref{Cor_free energy}.

\begin{proof}[Proof of Corollary~\ref{Cor_free energy}]

By Theorem~\ref{Thm_free energy expansion}, it remains to verify that for the equilibrium measure \( \sigma_{V_{d,t}} \) defined in \eqref{def of eq msr}, the constants \( C_1 \) and \( C_3 \) in Theorem~\ref{Thm_free energy expansion} satisfy the following identities:
\begin{equation} \label{C1 energy}
C_1= -I_{V_{d,t}}[\sigma_{ V_{d,t} }], \qquad C_3 |_{c=0} =   \frac{\log (2\pi)}{2}-1  -\frac12 \int_{S_{ V_{d,t} } } \log( \Delta V_{d,t}(z) )  \Delta V_{d,t}(z)\,\frac{d^2z}{\pi}.  
\end{equation} 

We first consider a general potential $V(z)$ having discrete rotational symmetry $V(z)=V(e^{2\pi i/d}z)$. 
Let $Q$ be a potential such that
\begin{equation}
V(z)= \frac{1}{d} Q(z^d). 
\end{equation}
Then it follows from \cite[Lemma 1]{BM15} that 
\begin{equation}
S_V= \{ z \in \C: z^d \in S_Q \}. 
\end{equation} 
Due to standard potential theory \cite{ST97}, the equilibrium measure $\sigma_W$ is characterised by the variational conditions  
\begin{equation} \label{eq:variational}
  \int_\mathbb{C} \log\frac{1}{|z-w|  } d\sigma_W(w) + \frac{1}{2}W(z)  \begin{cases}
  = F_W &   z \in S_W, 
    \smallskip   
    \\
\ge F_W &  z \in \mathbb{C}. 
  \end{cases}
\end{equation} 
Then by definition \eqref{def of log energy}, the energy $I_W[\sigma_W]$ can be written as 
\begin{align}
\begin{split} \label{energy Robin}
I_W[\sigma_W] = F_W + \frac12 \int W(z) \,d\sigma_W(z).
\end{split}
\end{align} 
Combining $F_V=F_Q/d$ (see \cite[p.409]{BM15}) and the change of variables 
\begin{align*}
\int_{S_V} V(z)\Delta V(z) \,\frac{d^2z}{\pi}= \frac{1}{d} \int_{ S_Q } Q(z)\Delta Q(z) \,\frac{d^2z}{\pi}, 
\end{align*} 
we have 
\begin{equation} \label{relation energies d 1}
I_V[\sigma_V]= \frac{1}{d} I_Q[\sigma_Q]. 
\end{equation}

Note that for the potential \eqref{def of V lemniscate}, we have 
\begin{equation}
	V_{d,t}(z)= \frac{1}{d} Q_t(z^d), \qquad Q_t(z):=   d|z-t|^2-dt^2. 
\end{equation}
Since the energy associated with the quadratic potential $|z|^2$ is $3/4$ (see e.g. \cite[Eq.~(1.13)]{BKS23}), after trivial translation and dilation, one can see that 
$$
I_{ Q_t }[ \sigma_{ Q_t } ]= \frac34-dt^2+\frac{\log d}{2}.
$$
Then by \eqref{relation energies d 1}, we obtain the first identity in \eqref{C1 energy}. 

Next, we compute $C_3$. By the change of variables $z=w^d$, the entropy can be computed as
\begin{align}
\begin{split}\label{2.41}
&\quad \int_{ S_{ V_{d,t} } } \log( \Delta  V_{d,t}(z) )  \Delta  V_{d,t}(z)\,\frac{d^2z}{\pi} 
= \int_{  S_{ V_{d,t} } } \log(d^2 |z|^{2d-2}) (d^2 |z|^{2d-2})\,\frac{d^2z}{\pi} 
\\ 
&=d \int_{ |w-t|<1/\sqrt{d} } \log(d^2 |w|^{(2d-2)/d} ) \,\frac{d^2w}{\pi} 
= 2\log(d) +2(d-1) \int_{ |w-t|<1/\sqrt{d} } \log |w|  \,\frac{d^2w}{\pi} . 
\end{split}
\end{align}
Note that by Jensen's formula, for $R>0$ and $p \in \mathbb{C}$ we have 
\begin{equation} \label{2.42}
	\int_{ |w-p|<R } \log|z-w| \,\frac{d^2w}{\pi} = 
    \begin{cases}
    R^2 \log|z-p| & \text{if } |z-p|>R, 
    \smallskip 
    \\
    R^2 \log R- \dfrac{R^2}{2} +\dfrac{|z-p|^2}{2} & \textup{if }|z-p|<R, 
    \end{cases} 
\end{equation}
see e.g. \cite[Lemma 2.6]{By24}.
By combining \eqref{2.41} and \eqref{2.42}, we obtain 
\begin{equation}
\int_{ S_{ V_{d,t} } } \log( \Delta  V_{d,t}(z) )  \Delta  V_{d,t}(z)\,\frac{d^2z}{\pi}  
= 
\begin{cases}
\displaystyle  2 \log(d) + \frac{2(d-1)}{d} \log(t),  &\textup{if } t> t_c,
  \smallskip
  \\
\displaystyle   \frac{d-1}{d} (dt^2-1) + \frac{1+d}{d} \log (d) & \textup{if } t< t_c. 
\end{cases} 
\end{equation} 
This gives rise to the second identity in \eqref{C1 energy}.  
\end{proof}

\medskip 

\appendix

\section{Asymptotic of the orthogonal norm} \label{Appendix_norm}

In this appendix, we provide a detailed derivation of \eqref{asymptotic op norm d1 bulk}.  
Recall that \( P_n \equiv P_n^{(c)} \) denotes the orthogonal polynomial and \( h_n \equiv h_n^{(c)} \) 
its squared norm, as defined in \eqref{def of orthogonality d 1}.  
For the latter purpose, we introduce a subscript \( N \) (i.e. \( P_{n,N} \) and \( h_{n,N} \)) to indicate the dependence on the scaling factor \( N \) appearing in the weight function \( e^{-N|z|^2} \) in \eqref{def of orthogonality d 1}.
This dependence amounts to a simple change of scaling, as expressed by the relation
\begin{equation}
\label{eq op relation}
P_{n,N}(z;a)=\Big(\frac{n}{N}\Big)^{\frac{n}{2}}P_{n,n}\Big(\sqrt{\tfrac{N}{n}}z,\sqrt{\tfrac{N}{n}}a\Big),
\end{equation}
see e.g. \cite[Eq. (3.37)]{BY23}.

In \cite{BBLM15}, the planar orthogonality \eqref{asymptotic op norm d1 bulk} is reformulated as a contour orthogonality, from which the associated Riemann--Hilbert problem can be constructed.
In particular, by \cite[Proposition 7.1]{BBLM15}, we have 
\begin{equation}\label{eq hn}
    h_n^{(c)}=-\frac{1}{\pi} \frac{\Gamma(c+n+1)}{2iN^{c+n+1}}\frac{\widetilde{h}_n^{(c)}}{P_{n+1,N}^{(c)}(0)}, \qquad \widetilde{h}_n^{(c)} \equiv \widetilde{h}_{n,N}^{(c)}(a):=\int_\Gamma P_{n,N}(z)^2w_{n,N}(z)\, dz,
\end{equation}
where $\Gamma$ is a simple closed curve that encloses the line segment $[0,a]$ and 
\begin{equation} \label{weight contour}
w_{n,N}(z):=\Big(\frac{z-a}{z}\Big)^c\frac{e^{-Naz}}{z^n}. 
\end{equation}
By \eqref{eq op relation}, we have
\begin{equation}
\label{eq hn relation}
\widetilde{h}_{n,N}(a)=\Big(\frac{n}{N}\Big)^{\frac{n+1}{2}}\widetilde{h}_{n,n}\Big(\sqrt{\tfrac{N}{n}}a\Big),
\end{equation}
see \cite[Eq.~(3.41)]{BY23}.

On the other hand, by the second asymptotic behaviour in \cite[Theorem 3]{LY17}, we have  
\begin{equation}
P_{N,N}^{(c)}(0) =  \frac{ a^{N } (1-a^2)^{c-1} }{ N^{1-c}\Gamma(c) }  e^{-Na^2 }  \Big( 1+O(\frac{1}{N}) \Big). 
\end{equation}
Combining this with \eqref{eq op relation}, we have
\begin{align}
\begin{split} \label{eq opN1}
P_{N+1,N}^{(c)}(0) &=\Big(\frac{N+1}{N}\Big)^{\frac{N+1}{2}}P_{N+1,N+1}\Big(0,\sqrt{\tfrac{N}{N+1}}a\Big)
=      \frac{ a^{N+1} (1- a^2)^{c-1} }{ N^{1-c}\Gamma(c) }  e^{-Na^2 }  \Big( 1+O(\frac{1}{N}) \Big). 
\end{split}
\end{align} 

Now, we define the matrix function $Y(z)$ by
\begin{equation}
Y(z):= 
\begin{bmatrix}
P_n(z)&\displaystyle\frac{1}{2\pi i}\int_\Gamma\frac{P_n(s)w_{n,N}(s)}{s-z}\, ds
\smallskip 
\\
   Q_{n-1}(z)&\displaystyle\frac{1}{2\pi i}\int_\Gamma\frac{Q_{n-1}(s)w_{n,N}(s)}{s-z} \, ds
\end{bmatrix},
\end{equation}
where $Q_{n-1}$ is the unique polynomial of degree $n-1$ satisfying
\begin{equation}\label{eq cond qn}
\displaystyle\frac{1}{2\pi i}\int_\Gamma\frac{Q_{n-1}(s)w_{n,N}(s)}{s-z} \, ds =\frac{1}{z^n} \cdot \Big( 1+O(\frac{1}{z}) \Big).
\end{equation}
Then we have 
\begin{equation} \label{norm in terms of Y12}
\widetilde{h}_N^{(c)} =-2\pi i  \lim_{z\to\infty}z^{N+1}[Y(z)]_{12}, 
\end{equation}
see e.g. \cite[Eq. (3.45)]{BY23}.  
Furthermore, by solving the associated Riemann-Hilbert problem, it was shown in \cite[Section 6]{LY17} that for $z$ outside the Szeg\H{o} curve, 
\begin{align}
\label{RHP of Y}
\begin{split}
    Y(z)  =e^{\frac{Nl}{2}\sigma_3}\Big(I+O(\frac{1}{N^\infty})\Big)\, R(z)\Phi(z)e^{-\frac{Nl}{2}\sigma_3}z^{N\sigma_3},
\end{split}
    \end{align}
where $\sigma_3$ is the third Pauli matrix, $l=\log a -a^2$, and 
\begin{equation}
R(z) = I+ \frac{a (1-a^2)^{c-1} }{ N^{1-c} \Gamma(c) } \frac{1}{z-a} \begin{bmatrix}
    0 & 1
    \\
    0 & 0
\end{bmatrix}, \qquad \Phi(z)=  \begin{bmatrix}\big(\frac{z}{z-a}\big)^c&0\\
  0& \big(\frac{z-a}{z}\big)^c\end{bmatrix}. 
\end{equation} 
Then by \eqref{norm in terms of Y12} and \eqref{RHP of Y}, we have 
\begin{align} \label{h tilde asymp}
 \widetilde{h}_N^{(c)} = -2\pi i\,  a^{N+1} e^{-Na^2} \Big( \frac{  (1-a^2)^{c-1}  }{ \Gamma(c) } \frac{ 1 }{ N^{1-c} } +O(\frac{1}{N})\Big). 
\end{align}
Now by combining \eqref{eq hn}, \eqref{eq opN1} and \eqref{h tilde asymp}, the desired asymptotic formula \eqref{asymptotic op norm d1 bulk} follows.

\subsection*{Acknowledgement} The author was supported by the National Research Foundation of Korea grants (RS-2023-00301976, RS-2025-00516909). The author thanks Yacin Ameur, Christophe Charlier, Tamara Grava, Nick Simm and Meng Yang for helpful discussions during the preparation of the manuscript.

\bibliographystyle{abbrv}

\end{document}